\theoremstyle{plain}
\newtheorem{theorem}{Theorem} 
\newtheorem{proposition}[theorem]{Proposition}
\theoremstyle{definition}
\newtheorem{definition}[theorem]{Definition}
\newtheorem{example}[theorem]{Example}
\def\squareforqed{\hbox{\rlap{$\sqcap$}$\sqcup$}}
\def\qed{\ifmmode\squareforqed\else{\unskip\nobreak\hfil
\penalty50\hskip1em\null\nobreak\hfil\squareforqed
\parfillskip=0pt\finalhyphendemerits=0\endgraf}\fi}
\newcommand{\varnodes}[1]{var(#1)}
\newcommand{\nat}{{\rm I\!l\!\!N}}
\title{Term Graph Rewriting and Parallel Term Rewriting}
\author{Andrea Corradini
\institute{Universit\`{a} di Pisa\\ Dipartimento di Informatica}
\email{andrea@di.unipi.it}
\and Frank Drewes
\institute{Department of Computing Science\\Ume\r{a} University}
\email{drewes@cs.unu.se}}
\begin{document}
\maketitle
 
\begin{abstract}
The relationship between Term Graph Rewriting and Term Rewriting is
well understood: a single term graph reduction may correspond to
several term reductions, due to sharing.  It is also known that if
term graphs are allowed to contain cycles, then one term graph
reduction may correspond to infinitely many term reductions. We stress
that this fact can be interpreted in two ways. According to the
\emph{sequential interpretation}, a term graph reduction corresponds
to an infinite sequence of term reductions, as formalized by Kennaway
et.\ al.\ using strongly converging derivations over the complete
metric space of infinite terms. Instead according to the
\emph{parallel interpretation} a term graph reduction corresponds to
the parallel reduction of an infinite set of redexes in a rational
term. We formalize the latter notion by exploiting the complete
partial order of infinite and possibly partial terms, and we stress that
this interpretation allows to explain the result of reducing circular
redexes in several approaches to term graph rewriting.

\end{abstract}

\section{Introduction}

The theory of {\em Term Graph Rewriting\/} (TGR) studies the issue 
of representing finite terms with directed, acyclic graphs, and of
modeling term rewriting via graph rewriting.
This field has a long history in the realm of theoretical
computer science, its origin dating back to the seventies of the last
century, when dags were proposed in~\cite{DBLP:conf/icalp/PaciniMT74}
as an efficient implementation of recursive program schemes.
Among the many contributions to the foundations of this field, we mention~\cite{Sta:CGLE,BEGKPS:TGR,Ken:GRSC,HP:ITRJ,CR:HRJR,DBLP:journals/fuin/AriolaK96,CG:A2CP}.  
The various approaches may differ for the way term graphs are
represented or for the precise definition of the graph rewriting
mechanism, but they all present equivalent results for what
concerns the speed-up of term rewriting due to the explicit sharing.

In fact, the  main advantage of using graphs is that when applying a rewrite
rule, the subterm matched by a variable $x$ of the left-hand side does
not need to be copied if $x$ appears more than once in the right-hand
side, because the sharing of subterms can be represented
explicitly. Therefore the
rewriting process is speeded up, because the rewriting steps do not
have to be repeated for each copy of a subterm.

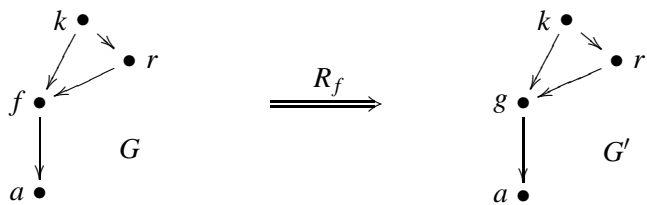
\begin{figure}[b]
\centering
$
\xymatrix@=1ex{
& \save []-<.3cm, .0cm>*\txt{$k$} \restore \bullet \ar[dr] \ar[ldd] & \\
& & \save []+<.3cm, .0cm>*\txt{$r$} \restore \bullet \ar[dll]\\
\save []-<.3cm, .0cm>*\txt{$f$} \restore\bullet \ar[dd]& & \\
& &  G\\
\save []-<.3cm, .0cm>*\txt{$a$} \restore \bullet & &
}
$
\hspace{1cm}
$
\xymatrix@=2ex{
& \\
& \\
\mbox{ }\ar@{=>}[rrr]^{\mbox{ $R_f$}} & & & \mbox{ } \\
& \\
& \\
}
$
\hspace{1cm}
$
\xymatrix@=1ex{
& \save []-<.3cm, .0cm>*\txt{$k$} \restore \bullet \ar[dr] \ar[ldd] & \\
& & \save []+<.3cm, .0cm>*\txt{$r$} \restore \bullet \ar[dll]\\
\save []-<.3cm, .0cm>*\txt{$g$} \restore\bullet \ar[dd]& & \\
& &  G'\\
\save []-<.3cm, .0cm>*\txt{$a$} \restore \bullet & &
}
$
\caption[ ]{An example of term graph rewriting}
\label{fi:term-graph-1}
\end{figure}

For example, suppose that rule $s(x) \rightarrow k(x,r(x))$ is applied
to term $s(f(a))$, obtaining term $t = k(f(a), r(f(a)))$. Now 
rule $R_f : f(x) \rightarrow g(x)$ can be applied twice to term
$t$, yielding in two steps term $t' =
k(g(a), r(g(a))$. If instead $t$ is represented as a graph where the
two identical subterms are shared (as in graph $G$ of Figure 
\ref{fi:term-graph-1}), then a single application of the rule is 
sufficient to
reduce it to graph $G'$ of Figure~\ref{fi:term-graph-1},  which 
clearly
represents term $t'$. Thus a single graph reduction may
correspond to $n$ term reductions, where $n$ is the ``degree of
sharing'' of the reduced subterm.

Often this fact is spelled out by observing that a single TGR
reduction corresponds to \emph{a sequence} of $n$ term reductions~\cite{BEGKPS:TGR,HP:ITRJ}, but one may equivalently think 
that the $n$ term reductions are performed \emph{in parallel}, in a
single step. The idea of reducing families of redexes in
parallel (\emph{family reductions}) was proposed already in~\cite{DBLP:journals/jcss/Vuillemin74,DBLP:journals/jacm/BerryL79} as
an alternative to the use of dags for representing the sharing of subterms.

Some authors considered the extension of term
graph rewriting to the cyclic case, allowing (finite, directed)
cyclic graphs as well. Actually, already in~\cite{BEGKPS:TGR} the
definition did not forbid cycles, and the relationship with 
term rewriting in this case was analyzed in depth
in~\cite{KKSV:AGRS}, as discussed below. It was observed that
allowing term graphs with cycles one could represent certain
structures that arise when dealing with recursive definitions (as for
the implementation of the fixed point operator {\sf Y} proposed
in~\cite{DBLP:journals/spe/Turner79}). Interestingly, even using
acyclic rules, cyclic graphs can be produced by rewriting in
presence of suitable sharing strategies~\cite{FRW:CPCR,FW:RCTG}.
Cyclic term graph rewriting was also defined abstractly in a
categorical setting in~\cite{CG:RCSE}, showing the equivalence with an
operational definition; and it was discussed in the framework of
Equational Term Graph Rewriting in~\cite{DBLP:journals/fuin/AriolaK96}.

A renewed interest in term graph rewriting with cycles is witnessed by
some recent publications. In~\cite{DBLP:journals/mscs/BaldanBCK07,DBLP:journals/entcs/BaldanBCK08}
the authors propose an extension of the $\rho$-calculus where the
sharing of subterms can be modeled explicitly and cyclic definitions
are allowed. In~\cite{DBLP:journals/entcs/DuvalEP07} cyclic term
graphs are used to represent data structures, 
and rewriting models the transformation of such structures with both
local and global redirection of pointers. In~\cite{DBLP:journals/mscs/DoughertyLL06}
they are used for the definition of the type system of an object
oriented language. 

Considering the relationship with term rewriting, the first 
consequence of the extension to cyclic term graphs is
that infinite terms (or, more precisely, \emph{rational} terms,
i.e., infinite terms with a finite number of distinct subterms) can
be represented as well. The second effect is that
a single graph reduction may now correspond to some infinite
term rewriting. Consider for example rule $R_f$ above: by applying it
to  graph $H$ of Figure \ref{fi:term-graph-2} one obtains the graph 
$H'$. Clearly, $H$
represents the infinite term $f^\omega = f(f(f(\ldots)))$, while $H'$
represents term $g^\omega$. As for the finite case, there are  
two possible ways of
interpreting the rewriting of term $f^\omega$ to term $g^\omega$ via
infinitely many applications of rule $R_f$:

\begin{figure}
\centering
$
\SelectTips{cm}{}
\xymatrix@=1ex{
f \bullet \ar `r[d] `[] `[] []& \mbox{ } \\
\mbox{ } &  \\
H\\}
$
\hspace{1cm}
$
\xymatrix@=2ex{
& \\
\mbox{ }\ar@{=>}[rrr]^{\mbox{ $R_f$}} & & & \mbox{ } \\
& \\
}
$
\hspace{1cm}
$
\SelectTips{cm}{}
\xymatrix@=1ex{
g \bullet \ar `r[d] `[] `[] []& \mbox{ } \\
\mbox{ } & \\
H'\\}
$
\caption[ ]{An example of cyclic term graph rewriting}
\label{fi:term-graph-2}
\end{figure}
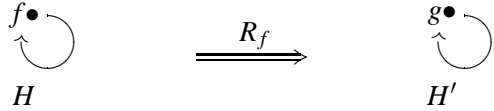

\begin{description}
\item[{[sequential interpretation]}]
$g^\omega$ is the limit of an infinite sequence of applications of
$R_f$, i.e., $f^\omega \rightarrow_{R_f} g(f^\omega) \rightarrow_{R_f}
g(g(f^\omega)) \rightarrow_{R_f}\ldots \leadsto_\omega g^\omega$.

\item[{[parallel interpretation]}]
$g^\omega$ is the result of the simultaneous application of $R_f$
to an infinite number of redexes in $f^\omega$: in a single step all
the occurrences of $f$ in $f^\omega$ are replaced by $g$.
\end{description}

\noindent
The first interpretation has been thoroughly formalized in~\cite{DK:RRRR,DKP:INFP,FRW:CPCR,FW:RCTG,KKSV:AGRS}, where the authors
elaborated a theory of (rational) \emph{transfinite term rewriting},
showing that finite, cyclic graph rewriting is an adequate
implementation for it. In essence, a finite graph derivation sequence
has the ``same effect'' of a \emph{strongly convergent} infinite
term rewriting sequence. As far as the notion of convergence is
concerned, the well-known topological structure of (possibly infinite)
terms is used, which, equipped with a suitable notion of distance,
form a complete ultra-metric space~\cite{AN:MSIT}.

In this paper we provide instead a formalization of the \emph{parallel
interpretation} above. To this aim we exploit the theory of \emph{infinite
parallel term rewriting} that has been proposed by the first author
in~\cite{Cor:TRC} 
by exploiting the complete partially ordered structure of
$CT_\Sigma$, the collection of possibly infinite, possibly partial
terms over signature $\Sigma$.

Interestingly, unlike the acyclic case for which the sequential and
the parallel interpretations of term graph rewriting are completely
equivalent, in the cyclic case there are cases
where the two interpretations lead to different results. This
happens when {\em collapsing\/} rules are considered, 
i.e.,
rules having a variable as right-hand side.
The canonical collapsing rule is the rule for identity, $R_I: I(x)
\rightarrow x$, and the pathological case (considered already by many
authors) is the application of $R_I$ to $I^\omega$. Using the sequential
interpretation, we have that $I^\omega \rightarrow_{R_I}
I^\omega \rightarrow_{R_I} \ldots$, and thus the limit of the
sequence is $I^\omega$ itself. Instead according to the parallel
interpretation all the occurrences of $I$ in $I^\omega$ are deleted
in a single step, and thus we should obtain as result a term 
that does not contain any function symbol: we will show indeed that we get
the completely undefined term $\bot$, the bottom element of the
complete partial ordering $CT_\Sigma$. 

Both the sequential and the parallel interpretations turn out to be  
meaningful from the point of view of cyclic term graph rewriting, 
because depending on the chosen rewriting approach one can get 
different results when applying the collapsing rule $R_I$ to the
\emph{circular-$I$}, i.e., to the graph having a single node labelled by
$I$ and a loop, which clearly unravels to $I^\omega$.
In fact, if one uses the operational definition of term graph
rewriting proposed in~\cite{BEGKPS:TGR} (as done  
in~\cite{KKSV:AGRS,FRW:CPCR,FW:RCTG}) then 
the circular-$I$  reduces via $R_I$ to itself, and this is 
consistent with the sequential interpretation.

Instead for several other definitions of term graph rewriting
(including the double-pushout~\cite{HP:ITRJ,CR:HRJR}, the
single-pushout~\cite{Ken:GRSC}, the
equational~\cite{DBLP:journals/fuin/AriolaK96}, an the
categorical~\cite{CG:RCSE} approaches) the circular-$I$ rewrites via
$R_I$ to a graph consisting of a single, unlabeled node, which can be
regarded as the graphical representation of the undefined term $\bot$,
and is therefore consistent with the parallel interpretation.

The paper is organized as follows. In Section 
\ref{se:term rewriting} we summarize the basic definitions about
infinite  terms~\cite{ADJ:IASCA}, orthogonal term rewriting~\cite{HL:CORS1},
and  parallel term rewriting~\cite{Cor:TRC}. 
In Section \ref{se:Term Graphs and Rational Terms} we introduce 
(possibly cyclic)
term graphs, and we make precise their relationship with (sets of)
rational terms, via the unraveling function. 
Algebraic term graph rewriting is the topic of Section
\ref{se:Algebraic term  graph rewriting}, where we recall the basics
of the  double-pushout approach~\cite{Ehr:TIAA}, apply it to the category
of term  graphs, and provide an encoding of term rewrite rules as
graph  rules. The main results of the paper are in Section
\ref{se:adequacy}.  We first prove that a single reduction of a graph
induces on the unraveled term a possibly infinite parallel
reduction. Next this fact is used as a main lemma in the proof that
the unraveling function is an adequate mapping from any orthogonal
TGRS system to the orthogonal TRS obtained by unraveling its rules,
provided that only rational terms and rational parallel  reduction
sequences  are considered.    
Finally in Section \ref{se:conclusion} we summarize our contribution.


\section{Infinite terms and parallel term rewriting}
\label{se:term rewriting}

In this background section we first introduce the algebra $CT_\Sigma$ 
of  
possibly partial, possibly 
infinite terms. The collection of such terms forms a 
complete partial ordering which has been studied in depth in~\cite{ADJ:IASCA}. Next we introduce the basic definitions related to 
(orthogonal) term rewriting, which apply to infinite terms as well. 
Finally, we introduce the definition of 
infinite parallel rewriting, which exploits the CPO structure of terms 
according to~\cite{Cor:TRC}.

\subsection{Infinite Terms}
\label{ss:Infinite Terms}

Most of the following definitions are borrowed from~\cite{ADJ:IASCA}.

Let $\omega^*$ be the set of all finite strings of
positive natural numbers. Elements of $\omega^*$ are called {\bf 
occurrences}.
The empty string is denoted by $\lambda$, and $u \le w$ indicates that 
$u$ is a prefix of $w$. 
Occurrences
$u$, $w$ are called {\bf disjoint} (written $u | w$) if neither $u \le 
w$
nor $w \le u$. 

Let $\Sigma$ be a (one-sorted) signature, i.e., a ranked alphabet of 
operator
symbols $\Sigma = \cup_{n\in\nat}\Sigma_n$ and let $X$ be a set of
variables. 
A {\bf term} over ($\Sigma$,
$X$) is a partial function $t: \omega^* \rightarrow\Sigma\cup X$,
such that the
domain of definition of $t$, $\mathcal{O}(t)$, satisfies the following 
(where $w \in\omega^*$ and all $i \in\omega$):
\begin{itemize}
\item $wi \in\mathcal{O}(t) \Rightarrow w \in\mathcal{O}(t)$

\item $wi \in\mathcal{O}(t) \Rightarrow t(w) \in\Sigma_n$
for some $n \ge i$.
\end{itemize}

$\mathcal{O}(t)$ is called the {\bf set of occurrences} of 
$t$. 
We denote by $\bot$ (called {\bf bottom}) the empty term, i.e. the 
only term such that $\mathcal{O(\bot)} = \emptyset.$

Given an occurrence $w \in\omega^*$ and a term $t$, the {\bf subterm}  
of $t$ at (occurrence) $w$
is the term
$t / w$ defined as $t / w(u) = t(wu)$ for all $u \in\omega^*$.
A term $t$ is {\bf finite\/} if $\mathcal{O}(t)$ is finite; 
it  is {\bf
total\/} if $t(w) \in\Sigma_n\Rightarrow wi \in
\mathcal{O}(t)$ for all $0 < i \le n$;
it is {\bf linear} if no variable occurs more than once in it;
and it is {\bf rational} if it has a finite number of different
subterms.
Given terms $t, s$ and an occurrence $w \in\omega^*$, the {\bf
replacement} of $s$ in $t$ at (occurrence) $w$,
denoted $t[w \leftarrow
s]$, is the term defined as $t[w \leftarrow s](u) = t(u)$ if $w
\not\le u$ or $t / w = \bot$, and $t[w \leftarrow s](wu) = s(u)$
otherwise.  

The set of terms over ($\Sigma$, $X$) is denoted by
$CT_\Sigma(X)$ ($CT_\Sigma$ stays for
$CT_\Sigma(\emptyset)$).
Throughout the paper we will often use (for finite terms) the
equivalent and more usual representation of terms as operators applied
to other terms. Partial terms are made total in this representation by
exploiting the empty term $\bot$. Thus, for example, if $x \in X$, $t = f(\bot,g(x))$
is the term such that $\mathcal{O}(t) = \{\lambda, 2, 2\cdot 1\}$, 
$t(\lambda)
= f \in\Sigma_2$, $t(2) = g \in\Sigma_1$, and $t(2\cdot 1)
= x \in X$.


It is well known that $CT_\Sigma(X)$ forms a {\bf complete partial 
order} 
with respect to the ``approximation'' relation. We say that $t$ {\bf 
approximates} $t'$ (written $t \leq t'$) iff $t$ is less defined than     
$t'$ as partial function.
The least element of $CT_\Sigma(X)$ with respect to\ $\le$ is clearly 
$\bot$. An {\bf $\omega$-chain}
$\{t_i\}_{i<\omega}$ is an infinite sequence of terms
$t_0\le t_1 \le \ldots$. Every $\omega$-chain
$\{t_i\}_{i<\omega}$ in $CT_\Sigma(X)$ has a {\bf least
upper bound} (lub) $\bigcup_{i<\omega}\{t_i\}$ characterized as
follows:
\[t = \bigcup_{i<\omega}\{t_i\} \quad \Leftrightarrow \quad \forall w
\in \omega^*\,.\, \exists i < \omega \,.\, \forall j
\ge i\,.\, t_j(w) = t(w)\]

Moreover, every pair of terms has a greatest lower bound.  All this 
amounts 
to say that
$CT_\Sigma(X)$ is an {\bf $\omega$-complete lower semilattice}.

\subsection{Term Rewriting}
\label{ss:Term Rewriting}

We recall here the basic definitions of (orthogonal) term rewriting~\cite{HL:CORS1},  
which apply to infinite terms as well.

Let $X$ and $Y$ be
two sets of variables. A {\bf substitution} (from $X$ to $Y$) is
a function $\sigma: X \rightarrow CT_\Sigma(Y)$ (used in postfix 
notation). 
Such a substitution $\sigma$ can be extended in a unique way to
a continuous (i.e., monotonic and lub-preserving) function $\sigma : 
CT_\Sigma(X) \rightarrow CT_\Sigma(Y)$, which extends $\sigma$ as 
follows

\begin{itemize}

\item $\bot\sigma = \bot$,

\item $f(t_1, ..., t_n)\sigma = f(t_1\sigma,
..., t_n\sigma)$,

\item $\left(\bigcup_{i<\omega} \{t_i\}\right)\sigma = 
\bigcup_{i<\omega} 
\{t_i\sigma\}$.
\end{itemize}


\noindent
A {\bf rewrite rule} $R = (l, r)$ is a pair of total terms of 
$CT_\Sigma(X)$,
where $var(r) \subseteq var(l)$, $l$ is finite and it is not a 
variable.\footnote{The restriction to left-finite rules can be 
motivated 
intuitively, as in~\cite{KKSV:AGRS}, by the requirement of checking in 
finite 
time the applicability of a rule to a term. On a more technical 
ground, in~\cite{Cor:TRC} it is shown that point 3 of Theorem \ref{th:WD-ParRdxApp}
below does not hold for left-infinite rules.}
Terms $l$ and
$r$ are called the {\bf left-} and the {\bf right-hand side} of $R$, 
respectively. 
A rule is called {\bf left-linear} if  $l$ is linear, and it is {\bf 
collapsing} if $r$ is a variable. 
A {\bf term rewriting 
system} (shortly {\bf TRS})
$\mathcal{R}$ is a finite set of rewrite rules, $\mathcal{R} = \{R_i\}_{i 
\in 
I}$.

Given a term rewriting system
$\mathcal{R}$, a {\bf redex} (for REDucible EXpression)  $\Delta$ of a 
term 
$t$  is a pair 
$\Delta = (w, R)$ where $R: l \rightarrow r \in \mathcal{R}$ is a rule,  and
$w$ is an occurrence of $t$, such that there exists a substitution 
$\sigma$ which {\bf realizes} $\Delta$, i.e., 
such that $t/w = l\sigma$. In this case we say that $t$ {\bf reduces} 
(via $\Delta$) to 
the 
term $t' = t[w 
\leftarrow r\sigma]$, and we write  $t \rightarrow_\Delta 
t'$ or simply $t \rightarrow t'$. 
A {\bf reduction sequence} $t_1 \rightarrow t_2 \rightarrow \ldots
t_n$ is a finite sequence of reductions. 

A TRS $\mathcal{R}$ is {\bf orthogonal} (shortly, it is an {\bf OTRS}) 
if all the rules in $\mathcal{R}$ are 
left-linear and it is {\bf non-overlapping}; that is, the 
left-hand 
side of each rule does not unify with a non-variable subterm 
of 
any other rule in $\mathcal{R}$, or with a proper, non-variable subterm
of itself. In this paper we will be concerned  
with orthogonal TRS's only, because the confluence of such 
systems is a key property needed in the next section to 
define    parallel term rewriting.

\subsection{Parallel Term Rewriting}

As discussed in the introduction, in the parallel interpretation of
term graph rewriting the reduction of a cyclic graph corresponds to 
the parallel reduction of a possibly infinite set of redexes 
in the corresponding term. 
The definitions below summarize, in a simplified way, those in~\cite{Cor:TRC}.
Intuitively, {\em finite} parallel rewriting can be defined easily by 
exploiting the confluence of orthogonal term rewriting. In fact, 
the parallel reduction of a finite number of redexes is defined simply 
as any 
{\em complete development} of them: Any such development ends with the 
same term, so the result is well defined. Let us recall the relevant 
definitions.
  
Given two redexes of a term, the reduction of one of them can 
transform 
the other in various ways. 
The second redex can be destroyed, it can be left intact, or it can be 
copied a 
number of times. This is captured by the definition of residuals. We 
assume here that the rules belong to an OTRS, thus two redexes in a 
term 
are either the same or do not overlap. 

\begin{definition}[residuals]
\label{de:residuals}
Let $\Delta = (w, R)$ and $\Delta' = (w', R': l' \rightarrow r')$ be 
two redexes in a term $t$. The {\bf set of residual of 
$\Delta$ by $\Delta'$} is denoted by $\Delta \backslash 
\Delta'$, and 
it is defined as

\[
\Delta \backslash \Delta' = \left\{
\begin{array}{lp{8cm}} \emptyset & if $\Delta = \Delta'$\\
\{\Delta\}  & if $w \not > w'$\\
\{(w'w_xu, R) \mid r'/w_x = l'/v_x\}  & if $w = w'v_xu$ and $l'/v_x$
is a variable 
\end{array}
\right.
\]

\noindent
If $\Phi$ is a finite set of redexes of $t$ and $\Delta$ is a redex of 
$t$, then the set of residuals of $\Phi$ by $\Delta$, denoted $\Phi 
\backslash \Delta$, is defined as the union of 
$\Delta' \backslash \Delta$ for all $\Delta' \in \Phi$.

If $\Phi$ is a set of redexes of $t$ and $s = \left(t 
\rightarrow_{\Delta_1} 
t_1 \ldots
\rightarrow_{\Delta_n} t_n\right)$ is a reduction sequence, then $\Phi 
\backslash s$ is defined as $\Phi$ if $n = 0$, and as $(\Phi 
\backslash 
\Delta_1) \backslash s'$, where $s' =  \left(t_1 
\rightarrow_{\Delta_2} t_2 
\ldots
\rightarrow_{\Delta_n} t_n\right)$, otherwise.
\end{definition}

In the last definition, if $t \rightarrow_\Delta t'$ and $\Delta'$ is 
a 
redex of $t$, the orthogonality of the system ensures that every 
member 
of $\Delta' \backslash \Delta$ is a redex of $t'$.

\begin{definition}[complete development]
Let $\Phi$ be a finite set of redexes of $t$. A {\bf development of 
$\Phi$} is 
a 
reduction sequence such that after each initial segment $s$, the next 
reduced redex is an element of $\Phi \backslash s$. A {\bf complete 
development of $\Phi$} is a development $s$ such that $\Phi \backslash 
s 
= \emptyset$.
\end{definition}

The following well-known fact~\cite{DBLP:journals/jacm/BerryL79} is a  
consequence of the {\em parallel moves lemma}~\cite{curry-feys:combinatory-logic}.

\begin{proposition}
All complete developments $s$ and $s'$ of a finite set of redexes 
$\Phi$ in a 
term $t$ 
are finite, and end with the same term. Moreover, for each redex 
$\Delta$ 
of $t$, it holds $\Delta \backslash s = \Delta \backslash s'$. 
Therefore 
we can safely denote by $\Delta \backslash \Phi$ the residuals of 
$\Delta$ by any complete development of $\Phi$ (and similarly 
replacing 
$\Delta$ with a set of redexes $\Phi'$ of $t$). 
\end{proposition}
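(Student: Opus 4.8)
The plan is to recognize this statement as the \emph{Finite Developments Theorem} for the orthogonal system $\mathcal{R}$, and to split it into three separate claims: (a) every complete development is finite, (b) any two complete developments of $\Phi$ reach the same term, and (c) residuals of an arbitrary redex $\Delta$ are invariant under the choice of complete development. The natural framework is to view developments as reductions in an abstract rewriting system whose states are pairs $(u,\Psi)$, where $u$ is the current term and $\Psi$ is the set of still-to-be-contracted residuals of $\Phi$. A step contracts one element $\Delta'\in\Psi$ and moves to $(u',\,\Psi\backslash\Delta')$, where $u'$ is the term obtained by contracting $\Delta'$ and $\Psi\backslash\Delta'$ is the residual set of Definition~\ref{de:residuals} (note $\Delta'\backslash\Delta'=\emptyset$, so $\Delta'$ is automatically removed). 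The normal forms are exactly the states with $\Psi=\emptyset$, so complete developments of $\Phi$ are precisely the reductions to normal form. Claims (b) and (c) then amount to confluence with unique normal forms, and (a) to strong normalization.

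First I would establish the one-step elementary diagram, which is the \emph{parallel moves lemma}~\cite{curry-feys:combinatory-logic} specialized to the present setting: for two redexes $\Delta_1,\Delta_2$ of $t$ drawn from $\Phi$, contracting $\Delta_1$ and then the residuals $\Delta_2\backslash\Delta_1$ yields the same term as contracting $\Delta_2$ and then $\Delta_1\backslash\Delta_2$, and for every further redex $\Delta$ of $t$ the two orders produce the same residual set. Because $\mathcal{R}$ is orthogonal, any two redexes are equal, disjoint, or nested with the inner one sitting inside a variable-matched subterm of the outer; in each case the diagram is checked directly from Definition~\ref{de:residuals}, and left-linearity guarantees that contraction neither creates overlaps nor destroys residuals except as recorded by $\backslash$. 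This lemma is exactly local confluence of the development system, together with the residual-tracking needed for~(c).

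The harder part, and the one I expect to be the main obstacle, is strong normalization, i.e.\ claim~(a): residuals of a nested redex can be \emph{copied} whenever the enclosing rule has its governing variable occurring several times in its right-hand side, so a naive count of residuals need not decrease under contraction. I would resolve this by a weight argument in the style of de~Vrijer: assign to each redex of $\Phi$ a natural-number weight chosen so that the weight of a redex strictly exceeds $(M-1)$ times the total weight of all $\Phi$-redexes nested strictly below it, where $M$ is the largest multiplicity with which any variable occurs in a right-hand side of $\mathcal{R}$. Since $\Phi$ is \emph{finite}, its nesting order is a finite partial order and such weights can be assigned bottom-up; contracting a redex $\Delta'$ removes its weight while copying the redexes below it at most $M$ times, so the sum of the weights of the current residuals strictly decreases. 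Well-foundedness of $\nat$ then yields termination of every development. It is worth stressing that the possible infinity of $t$ is harmless here: only the finitely many redexes of $\Phi$ and their finite region of mutual nesting participate in the measure, so the argument is identical to the finite-term case.

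With local confluence from the parallel moves lemma and strong normalization from the weight measure in hand, Newman's lemma gives confluence of the development system and uniqueness of normal forms; since complete developments are exactly the reductions to normal form, all of them terminate, proving~(a), and reach the same term, proving~(b). For~(c) I would carry the residual set $\Delta\backslash(\cdot)$ of a fixed external redex $\Delta$ as extra bookkeeping in the state and observe that the residual-tracking clause of the parallel moves lemma makes this datum commute across the elementary diagram; hence it is a function of the normal form alone, and uniqueness of the normal form yields $\Delta\backslash s=\Delta\backslash s'$. The same argument applied componentwise handles a set $\Phi'$ of redexes in place of $\Delta$, which justifies the notations $\Delta\backslash\Phi$ and $\Phi'\backslash\Phi$.
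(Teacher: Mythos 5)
The paper does not actually prove this proposition: it is stated as a ``well-known fact'' with a citation to Berry--L\'evy, justified in one line as ``a consequence of the parallel moves lemma'' of Curry--Feys. Your proof is therefore more than the paper offers, and it follows the standard route for the Finite Developments Theorem in orthogonal systems: parallel moves lemma for the elementary diagram and for residual tracking, a decreasing weight (de~Vrijer style) for strong normalization, and Newman's lemma to conclude confluence and uniqueness of the final term and of $\Delta\backslash s$. This decomposition is sound, and you are right to single out termination as the part that does \emph{not} follow from the parallel moves lemma alone --- on that point your argument is more honest than the paper's attribution. One caveat specific to this paper's setting: rules here are pairs of terms in $CT_\Sigma(X)$ with only the \emph{left}-hand side required to be finite, so a right-hand side may be an infinite (rational) term in which a variable occurs infinitely often. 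In that case the residual set $\Delta\backslash\Delta'$ of Definition~\ref{de:residuals} can be infinite, your multiplicity bound $M$ is undefined, and indeed the residual of a finite $\Phi$ after one step need no longer be finite, so the weight measure (and the proposition itself, read literally) needs the extra hypothesis that every variable occurs finitely often in every right-hand side. This is a gap in the paper's formalization rather than in your argument --- your proof is correct for all finite-right-hand-side rules, which is the setting of the cited classical results --- but it is worth stating the hypothesis explicitly, since the unravelings of cyclic evaluation rules used later in the paper can in principle violate it.
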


Exploiting this fact, we define the parallel reduction  
of a finite set of redexes as any complete development of them.

\begin{definition} [finite parallel redex reduction]
\label{de:parallel-rdx}
A {\bf parallel redex} $\Phi$ of a term $t$ is a (possibly infinite, 
necessarily
countable) set of distinct redexes in $t$.  
Given a {\em finite} parallel redex $\Phi$ of $t$, we write $t 
\rightarrow_\Phi t'$ and say that there is a {\bf (finite) parallel 
reduction} 
from $t$ 
to $t'$ if there exists a complete development $t 
\rightarrow_{\Delta_1} 
t_1 \ldots \rightarrow_{\Delta_n} t'$ of $\Phi$.
\end{definition} 

We are now ready to extend the definition of application 
of parallel redexes to the infinite case. 
Given an infinite parallel redex 
$\Phi$ 
(i.e., an infinite set of redexes) of a term $t$, we consider a chain 
of 
approximations of $t$, $t_0 \le t_1 \le t_2 \ldots$, such that their 
limit is $t$, and that only a finite subset of $\Phi$ applies to each 
$t_i$. For each $i < \omega$, let $\Phi_i$ be the finite subset of 
$\Phi$ 
containing all and only those redexes of $t$ which are also redexes of 
$t_i$, and call $d_i$ the result of the parallel reduction of 
$\Phi_i$, 
i.e., $t_i \rightarrow_{\Phi_i} d_i$. Then the crucial fact is that 
the  
sequence of terms $d_0, 
d_1, d_2, \ldots$ defined in this way forms a chain: by definition we 
say 
that there is an infinite parallel reduction from $t$ to $d = 
\bigcup_{i<\omega}d_i$ via $\Phi$, written $t \rightarrow_\Phi d$.
Here is the formal definition.

\begin{definition}
[parallel redex reduction]
\label{de:par-rdx-app}

Given an infinite parallel redex $\Phi$ of a term $t$,
let $t_0 \le t_1 \le \ldots$ $t_n \le \ldots$ be any chain approximating 
$t$ (i.e., such that $\bigcup_{i<\omega}\{t_i\} = t$) and such that:

\begin{itemize}
\item For each $i < \omega$, every redex  $(w, R) \in \Phi$ is either 
a 
redex of $t_i$ or $t_i(w) = \bot$. That is, the image of the lhs of 
every 
redex in $\Phi$ is either all in $t_i$, or it is outside, but does not 
``cross the boundary''.  

\item For each $i < \omega$, let $\Phi_i \subseteq \Phi$ be the subset 
of 
all redexes in $\Phi$ which are also redexes of $t_i$; then $\Phi_i$ 
must be 
finite.
\end{itemize}

\noindent 
For each $i < \omega$, let $d_i$ be the result of 
the (finite) parallel reduction of $t_i$ via $\Phi_i$ (i.e., $t_i 
\rightarrow_{\Phi_i} d_i$). Then we say that there is an 
{\bf (infinite) parallel reduction} from $t$ 
to $d \stackrel{def}{=}  \bigcup_{i<\omega}\{d_i\}$ via $\Phi$, and we 
write 
$t \rightarrow_\Phi d$. 
\end{definition}

Note that in the last definition if the chain approximating $t$
contains finite terms only, then the second condition is automatically
satisfied. We consider more general chains, possibly including
infinite terms, because they arise
naturally in the proof of Theorem \ref{th:soundness}. 
The main result of this section states that the last definition is 
well-given. For a proof we refer to~\cite{Cor:TRC} (Theorem 32).

\begin{theorem}
[parallel redex reduction is well-defined]
\label{th:WD-ParRdxApp}
Definition~\ref{de:par-rdx-app} is well given, that is:

\begin{enumerate}
\item given an infinite parallel redex $\Phi$ of a term $t$, there
  exists a chain $t_0 \le t_1 \le \ldots t_n \le \ldots$ approximating
  $t$ and satisfying the conditions of Definition~\ref{de:par-rdx-app};

\item in this case, $\forall\, 0 \leq i < j < \omega \,.\, d_i \leq
  d_j$, thus $\{d_i\}_{i<\omega}$ is a chain;

\item the result of the infinite parallel reduction of $t$ via $\Phi$
  does not depend on the choice of the chain approximating $t$,
  provided that it satisfies the required conditions.
\end{enumerate}
\end{theorem}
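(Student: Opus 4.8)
The plan is to establish the three items in order, reducing (2) and (3) to a single \emph{Monotonicity Lemma} for finite parallel reduction that carries the real weight. Throughout I will use orthogonality in the form of the following basic observation: since $\mathcal{R}$ is non-overlapping and left-linear, a non-variable occurrence of the left-hand-side pattern of one redex of $\Phi$ is never the root of a \emph{distinct} redex of $\Phi$ (otherwise the two left-hand sides would overlap). Call the set of non-variable pattern occurrences of a redex its \emph{skeleton}; note that skeletons are finite (rules are left-finite) and prefix-closed.

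For item (1) I would construct the chain concretely by combining depth truncation with skeleton completion. Let $A_i = \{w \in \mathcal{O}(t) : |w| \le i\}$, a finite prefix-closed set, and let $t_i$ be the approximation of $t$ whose domain $\mathcal{O}(t_i)$ is $A_i$ together with the skeletons of all redexes of $\Phi$ rooted in $A_i$. Since $A_i$ is finite and each occurrence roots at most $|\mathcal{R}|$ redexes, only finitely many such skeletons occur, so $t_i$ is finite; and $A_i \uparrow \mathcal{O}(t)$ gives $\bigcup_{i<\omega}\{t_i\} = t$. The boundary condition of Definition~\ref{de:par-rdx-app} is exactly where orthogonality enters: a redex rooted in $A_i$ has its whole skeleton in $\mathcal{O}(t_i)$, hence is a redex of $t_i$; a redex rooted outside $A_i$ cannot have its root inside $\mathcal{O}(t_i)$, because by the observation above that root is not a skeleton occurrence of any other redex and is not in $A_i$, so $t_i$ is $\bot$ there. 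Consequently $\Phi_i$ is exactly the (finite) set of redexes rooted in $A_i$, and $t_i \le t_{i+1}$ since $A_i \subseteq A_{i+1}$.

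The technical core is the \textbf{Monotonicity Lemma}: if $t \le t'$, $\Psi'$ is a finite parallel redex of $t'$ each of whose redexes is either a redex of $t$ or rooted where $t$ is $\bot$, and $\Psi = \{\Delta \in \Psi' : \Delta$ is a redex of $t\}$, then $t \rightarrow_\Psi d$ and $t' \rightarrow_{\Psi'} d'$ imply $d \le d'$. I would prove it in two phases. First, developing the \emph{same} set $\Psi$ in the larger term $t'$ yields a result $e'$ with $d \le e'$: each single step $s \mapsto s[w \leftarrow r\sigma]$ is monotone in its starting term (left-linearity makes the matching substitution grow with $s$, substitution application is continuous hence monotone, and replacement is monotone), so a complete development of $\Psi$ in $t$ lifts step by step to one in $t'$ with $s_k \le s'_k$. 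Second, by the Proposition on complete developments I reorganise the development of $\Psi'$ in $t'$ as ``first $\Psi$, then the residuals $\Psi' \setminus \Psi$'', so that $d'$ is obtained from $e'$ by developing $\Psi' \setminus \Psi$. Given $d \le e'$, I conclude $d \le d'$ by confining these residuals: by hypothesis every redex of $\Psi' \setminus \Psi$ is rooted where $t = \bot$, hence its residuals after developing $\Psi$ live entirely in the material by which $t'$ strictly extends $t$, and reducing them — even collapsing ones, which only lift a deeper subterm of that same region to its root — never disturbs the part of $e'$ already equal to (an extension of) $d$.

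With the lemma in hand, items (2) and (3) follow. For (2), the chain of (1) satisfies the hypotheses with $t = t_i$, $t' = t_{i+1}$, $\Psi = \Phi_i$, $\Psi' = \Phi_{i+1}$ (one checks $\Phi_i$ is the restriction of $\Phi_{i+1}$ to $t_i$, and the boundary condition is admissibility of the very same chain), so $d_i \le d_{i+1}$ and $\{d_i\}_{i<\omega}$ is an $\omega$-chain whose lub exists in the CPO $CT_\Sigma(X)$. For (3), given two admissible chains with results $d = \bigcup_i d_i$ and $d' = \bigcup_j d'_j$, fix $i$; since $t_i$ is finite and $\bigcup_j t'_j = t$, there is $j$ with $t_i \le t'_j$, whence $\Phi_i$ is the restriction of $\Phi'_j$ to $t_i$ and the Lemma gives $d_i \le d'_j \le d'$. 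Taking lubs yields $d \le d'$, and symmetry gives equality. I expect the Monotonicity Lemma, and specifically the confinement of the residuals $\Psi' \setminus \Psi$ to the new material in the presence of collapsing rules, to be the main obstacle: the tempting shortcut ``more reductions produce a larger term'' is false (rewriting is not increasing for $\le$), so the argument must route through the residual calculus of Definition~\ref{de:residuals} rather than through a naive chain of inequalities.
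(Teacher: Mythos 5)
First, a caveat on the comparison itself: the paper does not prove Theorem~\ref{th:WD-ParRdxApp} — it explicitly defers to~\cite{Cor:TRC} (Theorem 32) — so your proposal can only be judged on its own merits. On those merits, items (1) and (2) and your Monotonicity Lemma are sound. The construction for (1) (depth truncation plus skeleton completion) does satisfy the boundary condition, and your use of non-overlapping to show that the root of a redex of $\Phi$ outside $A_i$ cannot lie in the skeleton of a redex rooted inside $A_i$ is exactly the right use of orthogonality. The two-phase proof of the Monotonicity Lemma is also correct, and you are right that the confinement of the residuals of $\Psi'\setminus\Psi$ is the crux: one can check by induction along the lockstep development that every such residual stays rooted at an occurrence outside $\mathcal{O}(s_k)$ for the smaller term, hence outside $\mathcal{O}(d)$, so reducing it (collapsing or not) only touches occurrences below a point where $d$ is already $\bot$.

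The genuine gap is in item (3). Your argument reads ``fix $i$; since $t_i$ is finite \ldots there is $j$ with $t_i \le t'_j$''. But Definition~\ref{de:par-rdx-app} does not require the approximating chain to consist of finite terms, and the paper explicitly insists on this generality (``We consider more general chains, possibly including infinite terms, because they arise naturally in the proof of Theorem~\ref{th:soundness}''). For an infinite $t_i$ there need be no $j$ with $t_i \le t'_j$: take $t = g(f^\omega,f^\omega)$ with $\Phi$ the $R_f$-redexes of the left branch only, $t_i = g(f^i(\bot),f^\omega)$ and $t'_j = g(f^j(\bot),f^j(\bot))$; both chains are admissible, yet no $t'_j$ dominates any $t_i$. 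So your cofinality step establishes $d \le d'$ only when the first chain is finite-termed, and the reverse inequality against a chain with infinite elements is simply not addressed. Closing this requires an extra ingredient beyond the Monotonicity Lemma — e.g.\ showing that the finite parallel reduction $t_i \rightarrow_{\Phi_i} d_i$ is \emph{continuous} in the starting term (so that $d_i$ is the lub of the results on the finite approximations $t_i \sqcap u_k$ of $t_i$, which the lemma can then compare to the canonical finite chain), or an equivalent double comparison through greatest lower bounds. This is real additional work, not a routine patch, precisely because — as you yourself observe — rewriting is not monotone for $\le$ in any naive sense.
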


As an example, let us consider again the two infinite reductions 
mentioned 
in the introduction, according to the 
parallel interpretation. 
In the term reduction from $f^\omega$ to $g^\omega$ (corresponding to 
the 
graph reduction of Figure \ref{fi:term-graph-2}), 
there are infinitely many redexes of rule $R_f$ 
in the term $f^\omega$, 
namely at occurrences $\lambda,1,1\cdot 1, \ldots$. Let $\Phi = 
\{\Delta_n 
\stackrel{def}{=}(1^n, R_f)\mid 
n \in \nat\}$ be this infinite parallel redex of $f^\omega$. 
As for the chain of terms approximating 
$t=f^\omega$, let us choose  $t_0 = 
\bot, t_1= f(\bot),\ldots, t_n = f^n(\bot)$. Clearly, for each $n$  
the set $\Phi_n \subseteq \Phi$ of redexes of $t_n$ contains exactly 
$n$ 
redexes. 
For each $n$ we have $t_n = f^n(\bot) \rightarrow_{\Phi_n} g^n(\bot)$, 
and thus, according 
to the definition, the result of the parallel reduction of $f^\omega$ 
via $\Phi$ is 
$\bigcup_{i<\omega}\{g^i(\bot)\} = g^\omega.$ 

In the case of rule $R_I$ and of the circular $I$, choosing a similar 
approximating chain we have   $\Phi =    
 \{\Delta_n 
\stackrel{def}{=}(1^n, R_I)\mid 
n \in \nat\}$, $t_n = I^n(\bot)$ for each $n$, 
$t_n = I^n(\bot) \rightarrow_{\Phi_n} \bot$, and thus $I^\omega$ 
reduces 
by $\Phi$ to $\bigcup_{i<\omega}\{\bot\} = \bot$.  

We shall need the following easy result.
\begin{proposition} 
[strong confluence of parallel reduction]
\label{pr:confluence}
Given an orthogonal TRS $\mathcal{R}$, parallel reduction is strong 
confluent, i.e., if $t'\ _{\Phi'}\!\!\leftarrow t \rightarrow_\Phi t''$, 
then there exist $t''', \Psi, \Psi'$ such that $t' \rightarrow_{\Psi'}
t'''\  _\Psi\!\!\leftarrow t''$. As a consequence, parallel reduction is 
confluent.
\end{proposition}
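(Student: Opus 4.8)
The plan is to reduce the infinite case to the finite \emph{parallel moves lemma} embodied in the proposition on complete developments, using the CPO structure together with a single approximating chain that serves both reductions. The natural joining redexes are the \emph{residuals}: I will take $\Psi' = \Phi\backslash\Phi'$ as a parallel redex of $t'$ and $\Psi = \Phi'\backslash\Phi$ as a parallel redex of $t''$. First, observe that $\Phi\cup\Phi'$ is again a parallel redex of $t$: it is countable, and since $\mathcal{R}$ is orthogonal any two of its distinct members are non-overlapping. Hence Theorem~\ref{th:WD-ParRdxApp}(1) provides a chain $t_0\le t_1\le\cdots$ approximating $t$ that satisfies the conditions of Definition~\ref{de:par-rdx-app} for $\Phi\cup\Phi'$. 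This chain is then simultaneously admissible for $\Phi$ and for $\Phi'$, so by Theorem~\ref{th:WD-ParRdxApp}(3) the given results are recovered from it: writing $\Phi_i$ and $\Phi'_i$ for the (finite) sets of members of $\Phi$, resp.\ $\Phi'$, that are redexes of $t_i$, and $t_i\rightarrow_{\Phi_i} d_i$, $t_i\rightarrow_{\Phi'_i} e_i$, we get $t'' = \bigcup_{i<\omega}\{d_i\}$ and $t' = \bigcup_{i<\omega}\{e_i\}$.

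Next I would close the diagram at each finite level. For fixed $i$, both $\Phi_i$ and $\Phi'_i$ are \emph{finite} parallel redexes of $t_i$, so the proposition on complete developments applies to $\Phi_i\cup\Phi'_i$: developing $\Phi_i$ first and then a complete development of $\Phi'_i\backslash\Phi_i$, or developing $\Phi'_i$ first and then $\Phi_i\backslash\Phi'_i$, are both complete developments of $\Phi_i\cup\Phi'_i$ and therefore end in the \emph{same} term $f_i$. This yields the finite strongly confluent square $d_i\rightarrow_{\Phi'_i\backslash\Phi_i} f_i\ {}_{\Phi_i\backslash\Phi'_i}\!\!\leftarrow e_i$. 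I then set $t''' \stackrel{def}{=} \bigcup_{i<\omega}\{f_i\}$ and define the candidate residual redexes levelwise, $\Psi \stackrel{def}{=} \bigcup_{i<\omega}(\Phi'_i\backslash\Phi_i)$ and $\Psi'\stackrel{def}{=}\bigcup_{i<\omega}(\Phi_i\backslash\Phi'_i)$.

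The crux is to verify that $\{d_i\}$ is an \emph{admissible} approximating chain for $\Psi$ in the sense of Definition~\ref{de:par-rdx-app}, and symmetrically $\{e_i\}$ for $\Psi'$; once this is done, the reductions $t''\rightarrow_\Psi t'''$ and $t'\rightarrow_{\Psi'} t'''$ follow directly, the chain property of $\{f_i\}$ being then supplied by Theorem~\ref{th:WD-ParRdxApp}(2). Three things must be checked. (i) $\Psi$ is a genuine parallel redex of $t''$: the finite-level residual sets are compatible along the chain (a residual present at level $i$ persists at all later levels, since the extra redexes of $\Phi_j\setminus\Phi_i$ lie outside $t_i$ and hence below the approximation boundary, leaving residuals inside unchanged), so their union is a well-defined countable set of pairwise non-overlapping redexes of $\bigcup_i d_i$. (ii) Finiteness: $\Psi\cap(\mbox{redexes of } d_i)$ is finite, being contained in $\Phi'_i\backslash\Phi_i$. (iii) The boundary condition: every residual $(w,R)\in\Psi$ is either a redex of $d_i$ or satisfies $d_i(w)=\bot$. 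This is the main obstacle, and it is where orthogonality and the precise form of Definition~\ref{de:residuals} are used: contraction of $\Phi_i$ (whose redexes all lie inside $t_i$) neither moves matter across the approximation boundary nor turns a $\bot$-position of $t_i$ into a defined position of $d_i$ except strictly below the contracted redexes, so a member of $\Phi'$ inside $t_i$ yields residuals inside $d_i$ while one below a $\bot$ of $t_i$ yields residuals below a $\bot$ of $d_i$. Carrying out this bookkeeping — in particular for \emph{collapsing} rules, where the residual set of an inner redex may be empty or relocated according to the occurrences of the matched variable in the right-hand side — is the delicate but essentially routine part of the argument.

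Finally, the stated strong confluence is exactly the diamond property for single parallel reductions: any two parallel reductions out of $t$ are joined by single parallel reductions. Confluence of the reflexive–transitive closure of parallel reduction then follows by the standard tiling of such diamonds, which yields the claimed consequence and completes the proof.
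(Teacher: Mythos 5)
The paper never actually proves this proposition: it is introduced with ``We shall need the following easy result'' and stated without argument, the underlying machinery being delegated to the reference on parallel rewriting, so there is no official proof to compare yours against. Taken on its own terms, your strategy is the natural one and is, in outline, correct: a chain admissible for $\Phi\cup\Phi'$ is automatically admissible for both $\Phi$ and $\Phi'$ (both admissibility conditions are inherited by subsets), the finite diamond at each level is exactly the complete-development proposition applied to $\Phi_i\cup\Phi'_i$, and the limit construction is the right way to pass to the infinite case. You also correctly isolate where the real work lies, namely showing that $\{d_i\}_{i<\omega}$ is admissible for $\Psi=\bigcup_i(\Phi'_i\backslash\Phi_i)$. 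Two remarks on that verification. First, coherence of the levelwise residuals, i.e.\ $\Delta'\backslash\Phi_j=\Delta'\backslash\Phi_i$ for $\Delta'\in\Phi'_i$ and $j\ge i$, does hold, because the extra redexes of $\Phi_j\setminus\Phi_i$ sit at occurrences that are disjoint from, or strictly below, every residual occurrence lying inside $d_i$, so Definition~\ref{de:residuals} leaves those residuals unchanged; this also gives $\Psi\cap\mathrm{redexes}(d_i)=\Phi'_i\backslash\Phi_i$, which you need so that your $f_i$ coincides with the term produced by Definition~\ref{de:par-rdx-app} and $t'''=\bigcup_{i<\omega}\{f_i\}$ is legitimately supplied by Theorem~\ref{th:WD-ParRdxApp}(2). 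Second, your point (iii) rests on the claim that a complete development of $\Phi_i$ maps the undefined region of $t_i$ into the undefined region of $d_i$ (variables of contracted redexes are instantiated by subterms of $t_i$, so their $\bot$'s are only copied or erased, never refined); this is true but is the one step you assert rather than prove, and it is precisely where collapsing rules require the care you allude to. I would accept your text as a proof sketch at (indeed, above) the level of rigour the paper itself adopts for this statement; to make it a complete proof you must write out the residual-coherence and boundary-preservation arguments explicitly.
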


\section{Term Graphs and Rational Terms}
\label{se:Term Graphs and Rational Terms}

We summarize here the definition of {\em term graphs} (or 
simply 
{\em graphs}), and their relationship with rational terms, as 
introduced 
in~\cite{KKSV:AGRS}. However, since we will apply to those graphs the 
algebraic approach to graph rewriting, we shall slightly adapt the 
definition to our framework, emphasizing the categorical 
structure of the collection of graphs.

Term graphs are obtained from the usual representation 
of terms with sharing as {\em dag's} ({\em directed acyclic graph}), by 
dropping the acyclicity requirement. In such a way, a finite cyclic 
graph 
may represent a possibly infinite, but rational term.
 
\begin{definition}
[term graphs]
\label{de:graphs}
Let $\Sigma$ be a fixed, one-sorted\footnote{The 
generalization to many-sorted signatures is straightforward, by 
labeling nodes with pairs {\em $\langle$operator, sort$\rangle$}, 
with the obvious meaning.} signature. A {\bf (term) graph} $G$
(over $\Sigma$)
  is a triple $G = (N_G,s_G,l_G)$, where
\begin{itemize}
\item	$N_G$ is a {\em finite} set of {\bf nodes\/},
\item	$s_G: N_G \rightarrow N_G^*$ is a partial function, called the 
{\bf 
successor} function,
\item	$l_G:N_G \rightarrow \Sigma$ is a partial function, called the 
{\bf labelling} function.
\end{itemize}
\noindent
Moreover, it is required that $s_G$ and $l_G$ are defined on the same 
subset of $N_G$, and that for each node $n \in N_G$, if $l_G(n)$ is 
defined and it is an operator of arity $k$, then $s_G(n)$ has length 
exactly 
$k$. \end{definition}

\begin{definition}
[morphisms, category of term graphs]
\label{de:morphisms}
A {\bf (graph) morphism} $f: G \rightarrow H$ between two 
graphs
$G$ and $H$ is a function $f: N_G \rightarrow N_H$, which preserves 
labelling and successor functions, i.e., for each $n \in N_G$, if 
$l_G(n)$ is defined, then $l_H(f(n)) = l_G(n)$ and $s_H(f(n)) = 
f^*(s_G(n))$ (where $f^*$ is the obvious extension of $f$ to lists of 
nodes).

The composition of graph morphisms is defined in the obvious way, and 
it 
is clearly associative; moreover, the identity function on nodes is a 
morphism, and therefore term graphs (over $\Sigma$) and their 
morphisms 
form a category that will be denoted by \bf{TGraph$_\Sigma$}.
\end{definition}

Thanks to the conditions imposed on term graphs
in Definition \ref{de:graphs}, a term of $CT_\Sigma$ can be 
{\em extracted} or {\em unraveled} from every node of a graph.

\begin{definition}
[from term graphs to terms and backwards]
\label{de:from term graphs to terms}

A {\bf path} $\pi$ in a graph
$G = (N_G, s_G, l_G)$ from node $n$ to node $n'$ is a finite sequence 
$\pi = \langle n_1, j_1, n_2, j_2, \ldots, j_k,  n_{k+1}\rangle$,
where all $j_i$ are natural numbers, all $n_i$ 
are nodes, and such that $n_1 = n$, $n_{k+1} = n'$, and for all $1 
\leq i 
\le k$, $s_G(n_i)|_{j_i} = 
n_{i+1}$ (here $s|_i$ denotes the $i$-th element of the
sequence $s$). It follows that there exists exactly one {\bf empty 
path} `$\langle n \rangle$' from each node $n$ to itself.
The {\bf occurrence of a path} $\pi =\langle n_1, j_1, n_2, 
j_2, \ldots, j_k,  n_{k+1}\rangle$ is the list of natural numbers 
$j_1\cdot j_2 \cdots j_k$, and it is denoted by $\mathcal{O}(\pi)$. Thus 
$\mathcal{O}(\pi) = \lambda$ iff $\pi$ is an empty path. Clearly, 
for each node $n$ there is at most one path in $G$ having a given 
occurrence $w$. 

A graph $G$ is
{\bf acyclic} if there are  no non-empty paths from one node to 
itself; graph $G$ is a {\bf tree} with root $\underline{n} \in N_G$ 
iff there
exists exactly one path from $\underline{n}$ to any other node of $G$.
  
Let $G = (N_G, s_G, l_G)$ be a term graph. The set $\varnodes{G} 
\subseteq N_G$ 
of {\bf variable
nodes\/} or {\bf empty nodes} of $G$ is the set of nodes on which the 
labeling function (and thus also the successor function) is undefined.
For each node $n\in N_G$, 
$\mathcal{U}_G[n]$, the {\bf unraveling of $G$ at $n$} is the term 
defined as
\[
\mathcal{U}_G[n](w)=
 \left\{
  \begin{array}{l@{\hspace{1cm}}p{10cm}}
n' & if there is a path $\pi$ from $n$ to $n'$ with 
$\mathcal{ O}(\pi) = w$ and $n' \in \varnodes{G}$\\
l_G(n')	& if there is a path $\pi$ from $n$ to $n'$ with 
$\mathcal{ O}(\pi) = w$ and $n' \not \in \varnodes{G}$\\
 \bot & otherwise.
 \end{array}
\right.
\]
for all $w\in\omega^*$. It follows immediately from these definitions 
that for each $n \in N_G$ and for each occurrence $w$, $\mathcal{U}_G[n] 
/ w = \mathcal{U}_G[n']$ iff there exists a path $\pi$ from $n$ to $n'$ 
with $\mathcal{O}(\pi) = w$. By $\mathcal{U}[G]$ we denote the set 
$\mathcal{U}[G]=\{\mathcal{U}_G[n] \mid n\in N_G\}$.

Conversely, let $T$ be a finite set of rational terms, and let us 
denote by $\overline{T}$ its
closure under the subterm relation (i.e., $\overline{T} = \{t \mid t 
\mbox{ is
a subterm of some term in $T$}\}$). Then the 
{\bf term graph representation of $T$}, denoted $\mathcal{G}[T]$, is the 
graph  
$\mathcal{G}[T] = (N_{\mathcal{G}[T]}, s_{\mathcal{G}[T]}, l_{\mathcal{G}[T]})$ 
defined as 
follows:
\begin{enumerate}
\item
$N_{\mathcal{G}[T]} = \overline{T};$
\item 
$s_{\mathcal{G}[T]}(t) = \langle t_1, \ldots, t_k\rangle$ \quad
if $t = f(t_1, \ldots, t_k)$, and undefined if $t$ is a variable;
\item 
$l_{\mathcal{G}[T]}(t) = f$ \quad
if $t = f(t_1, \ldots, t_k)$, and undefined if $t$ is a variable.
\end{enumerate}
\end{definition}

It is quite easily seen that, for every term graph $G$ and 
node $n \in N_G$, $\mathcal{U}_G[n]$ is a well-defined and total term 
containing
variables in $\varnodes{G}$ (thus $\mathcal{U}_G[n] \in 
CT_\Sigma(\varnodes{G}))$.
Such term can be infinite (because in a cyclic
graph there can be infinitely many paths starting from one node), but 
since $G$ is finite by definition, $\mathcal{U}_G[n]$ is
necessarily a rational term, because
the number of distinct subterms is bounded by the 
cardinality of $N_G$. It is worth noting also that if $G$ is a tree 
with
root $\underline{n}$, then $\mathcal{U}_G[\underline{n}]$ is a finite and 
linear 
term.


The above definitions made clear the relationship between objects of
category {\bf TGraph$_\Sigma$} and terms in $CT_\Sigma$. Such 
relationship 
can be extended to arrows of {\bf TGraph$_\Sigma$} and to term 
substitutions 
as follows.

\begin{proposition}
[from morphism to substitutions and backwards]
\label{pr:morphisms}
Let  $f: G \rightarrow
H$ be a term graph morphism. The {\bf substitution induced by $f$} is 
the  
substitution $\sigma_f: \varnodes{G}
\rightarrow CT_\Sigma(\varnodes{H})$, defined as $x\sigma_f = 
\mathcal{U}_H[f(x)]$ for all $x \in \varnodes{G}$. Moreover the following 
hold:
\begin{enumerate}
\item
If $f: G \rightarrow H$ is a morphism, then 
$\sigma_f\circ \mathcal{U}_G = \mathcal{U}_H\circ f$.

\item
Let $G$ be a tree with root $\underline{n}$ and let $\mathcal{
U}_G[\underline{n}]=t$. Then for every
graph $H$ and every node $n'\in N_{H}$ with 
$t'=\mathcal{U}_{H}[n']$,
there is a substitution $\sigma$ such that $t\sigma=t'$ {\em if and 
only if}
there is a morphism $f: G\rightarrow H$ where
$f(\underline{n})=n'$. Moreover in  
this
case we have $\sigma = \sigma_f$.
\end{enumerate}
\end{proposition}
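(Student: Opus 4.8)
The plan is to prove the two parts in order, obtaining the first implication of part~2 as an immediate corollary of part~1. The conceptual engine for everything is a single observation: \emph{morphisms preserve paths together with their occurrences.} Indeed, if $\pi = \langle n_1, j_1, \ldots, j_k, n_{k+1}\rangle$ is a path in $G$, then each internal node $n_i$ (for $i \le k$) has a defined successor list and is therefore labelled, since by Definition~\ref{de:graphs} $s_G$ and $l_G$ are defined on the same subset; hence $s_H(f(n_i)) = f^*(s_G(n_i))$, so $s_H(f(n_i))|_{j_i} = f(n_{i+1})$, and $f(\pi) = \langle f(n_1), j_1, \ldots, j_k, f(n_{k+1})\rangle$ is a path in $H$ with $\mathcal{O}(f(\pi)) = \mathcal{O}(\pi)$. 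I will also use repeatedly the \emph{arity consistency} built into Definition~\ref{de:graphs}: a labelled node of arity $k$ has a successor list of length exactly $k$, in both $G$ and $H$.

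For part~1 I would fix a node $n \in N_G$ and verify $(\mathcal{U}_G[n]\sigma_f)(w) = \mathcal{U}_H[f(n)](w)$ for every occurrence $w \in \omega^*$, by case analysis on the longest prefix $v \le w$ admitting a path from $n$ with occurrence $v$ in $G$. If the node $n'$ reached at $v$ is a variable node $x$ (so $v$ is a variable occurrence and necessarily $v$ is maximal, as variable nodes have no successors), then on the left the substitution inserts $x\sigma_f = \mathcal{U}_H[f(x)]$ at $v$, while on the right $\mathcal{U}_H[f(n)]/v = \mathcal{U}_H[f(x)]$ because the path to $x$ maps to a path from $f(n)$ to $f(x)$; the two terms therefore coincide on the whole subterm rooted at $v$, covering every $w \ge v$. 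If instead $n'$ is labelled and $v = w$, then $w$ is an operator position untouched by the substitution, so both sides equal $l_G(n') = l_H(f(n'))$. The remaining case is $v < w$ with $n'$ labelled of arity $k$ and $w = v\,i\,w'$ where $i > k$ (otherwise the path would extend): the left-hand side is $\bot$, and by arity consistency $f(n')$ also has exactly $k$ successors, so no path from $f(n)$ has occurrence $w$ and the right-hand side is $\bot$ as well. This exhausts all cases.

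The backward implication of part~2 is then immediate: given a morphism $f$ with $f(\underline{n})=n'$, part~1 specialised to the root yields $t\sigma_f = (\mathcal{U}_G[\underline{n}])\sigma_f = \mathcal{U}_H[f(\underline{n})] = \mathcal{U}_H[n'] = t'$, so $\sigma_f$ witnesses the claim. The forward implication is where the real work lies, and it is here that I exploit that $G$ is a tree with root $\underline{n}$: every node $m$ is reached from $\underline{n}$ by a \emph{unique} path, whose occurrence I call $w_m$, and this sets up a bijection between $N_G$ and $\mathcal{O}(t)$, with $t$ finite and linear. Given $\sigma$ with $t\sigma = t' = \mathcal{U}_H[n']$, I define $f(m)$ to be the unique node of $H$ reached from $n'$ along the path with occurrence $w_m$ (unique by the remark in Definition~\ref{de:from term graphs to terms}).

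The main obstacle is showing that $f$ is well defined, i.e.\ that such a node always exists; I would establish this by induction on $w_m$ along the tree. The base case $w_m = \lambda$ gives $f(\underline{n}) = n'$. For the inductive step, if $m$ is the $j$-th successor of its parent $m'$, then $m'$ is labelled, say $l_G(m') = \phi \in \Sigma_k$ with $k \ge j$; since $w_{m'}$ is an operator position it is preserved by the substitution, so $\mathcal{U}_H[n'](w_{m'}) = \phi$, whence the node $f(m')$ reached at $w_{m'}$ is labelled $\phi$ and, by arity consistency, possesses a $j$-th successor, which is precisely the node reached at $w_m = w_{m'}\,j$. Once $f$ is defined, verifying that it is a morphism reduces to the same computation: $l_H(f(m)) = \phi = l_G(m)$ for labelled $m$, and $s_H(f(m))|_j = f(s_G(m)|_j)$ because following occurrence $w_m\,j$ from $n'$ amounts to taking the $j$-th successor after reaching $f(m)$. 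Finally $\sigma = \sigma_f$ follows because, $t$ being linear, each variable $x$ occurs only at $w_x$, so $x\sigma = (t\sigma)/w_x = \mathcal{U}_H[n']/w_x = \mathcal{U}_H[f(x)] = x\sigma_f$, using the path--subterm characterisation of $\mathcal{U}$ at the node $f(x)$ reached at $w_x$.
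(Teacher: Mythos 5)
The paper states Proposition~\ref{pr:morphisms} without proof, so there is no argument of the authors' to compare yours against; I can only judge the proposal on its own terms, and it is correct and complete. Your single key observation --- that a morphism maps every path of $G$ to a path of $H$ with the same occurrence, thanks to $s_H(f(n)) = f^*(s_G(n))$ and the arity condition of Definition~\ref{de:graphs} --- is exactly the right engine: it gives part~1 by the three-way case analysis on the longest occurrence reachable from $n$ (variable node reached, operator node reached at $w$ itself, or dead end beyond the arity), and it makes the ``morphism $\Rightarrow$ substitution'' half of part~2 an instance of part~1 at the root. The converse direction is also handled properly: the tree hypothesis is used where it is genuinely needed, namely to get the bijection $m \mapsto w_m$ between $N_G$ and $\mathcal{O}(t)$ so that $f$ can be defined occurrence-by-occurrence, with the inductive existence argument resting on the fact that substitution preserves operator positions and on arity consistency in $H$; and linearity of $t$ (guaranteed by the paper's remark that unravelling a tree yields a finite linear term) is what forces $\sigma = \sigma_f$ on $\varnodes{G}$. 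I see no gap.
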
 

\section{Algebraic term graph rewriting}
\label{se:Algebraic term graph rewriting}

We introduce now term graph rewriting according to the algebraic, 
double-pushout approach~\cite{Ehr:TIAA}. 
%
Let {\bf Graph} be a fixed category of graphs (below we will apply the 
general definitions to the category of term graphs defined in Section 
\ref{se:Term Graphs and Rational Terms}). The basic categorical 
construction in the 
algebraic definition of graph rewriting is that of pushout.

\begin{definition}
[pushout~\cite{ML:CWM} and pushout complement~\cite{Ehr:TIAA}]
\label{de:pushout} 

Given a category $C$ and two arrows $b:
K \rightarrow B$, $d: K \rightarrow D$ of $C$, a triple  $\langle  H, 
h:
B \rightarrow H, c: D \rightarrow H \rangle $  as in Figure 
\ref{fi:pushout} (a)
is called a {\bf pushout \/} of  $\langle  b, d \rangle $  if 
{\em[Commutativity]} $h  \circ  b = c  \circ  d$, and 
{\em [Universal Property]} for all objects $H'$ and arrows 
$h': B
 \rightarrow  H'$ and $c': D  \rightarrow  H'$, with $h'  \circ  b =
c'  \circ  d$, there exists a unique arrow $f: H  \rightarrow  H' $
such that $f  \circ  h = h'$ and $f  \circ  c = c'$.

In this situation, $H$ is called a {\bf pushout object} of  $\langle  
b, d
\rangle $.
Moreover, given arrows $b: K \rightarrow B$ and $h: B \rightarrow H$,
a {\bf pushout complement \/} of  $\langle  b, h \rangle $  is a 
triple
$\langle  D, d:
K \rightarrow D, c: D \rightarrow H \rangle $  such that  $\langle  H, 
h, c
\rangle $  is
a pushout of $b$ and $d$. In this case $D$ is called a {\bf pushout
complement object \/} of  $\langle  b, h \rangle $.
\end{definition}

\begin{figure}
\begin{center}
$    \xymatrix@=3ex{
     {K} \ar[dd]_d \ar[rr]^b & & {B} \ar[dd]_h \ar[dddr]^{h'} &  \\
 & & &\\
         {D} \ar[rr]^c \ar[drrr]_{c'} & & {H} \ar[dr]|f & \\
         & & & {H'} 
      }
$
\hspace{2cm}
$    \xymatrix@=7ex{
      {(L} \save []-<.5cm, .0cm>*\txt{$p=$} \restore
      \ar[d]_g & {K} \ar@{->}[l]_{l} \ar@{->}[r]^{r} \ar[d]^k & {R)} \ar[d]^h\\
      {G} & {D} \ar[l]^d \ar[r]_b & {H}
      }
$
\end{center}
\caption{{\bf (a)} Pushout diagram\quad {\bf (b)} 
Direct derivation as double-pushout construction}
\label{fi:pushout}
\end{figure}
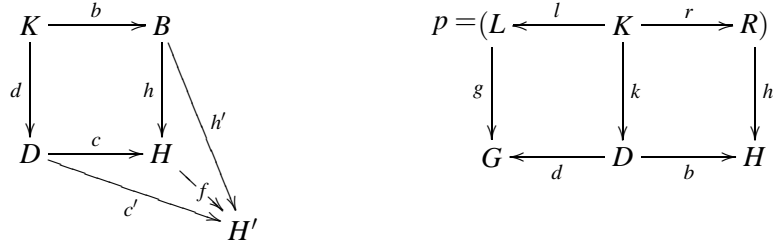

\begin{definition}
[graph grammars, direct derivations~\cite{Ehr:TIAA}]
\label{de:graph-grammar}

A {\bf (graph) production\/} $p = (L \stackrel{l}{\leftarrow} K
\stackrel{r}{\rightarrow} R)$ is a pair of injective 
graph morphisms $l: K \rightarrow L$ and $r: K \rightarrow R$.  The
graphs $L$, $K$, and $R$ are called the {\bf left-hand side},
the {\bf interface}, and the {\bf right-hand side} of $p$,
respectively. 
A {\bf graph transformation system} $\mathcal{G} = \{p_i\}_{i\in
I}$ is a set of graph productions.

Given a graph $G$, a graph production $p = (L
\stackrel{l}{\leftarrow} K \stackrel{r}{\rightarrow}  R)$, and an 
{\bf occurrence\/} (i.e., a graph morphism) $g: L \rightarrow G$, a
{\bf direct derivation $\alpha$ from $G$ to $H$ using $p$ (based on 
$g$)}
exists if and only if the diagram in Figure \ref{fi:pushout} (b)
can be constructed, where both squares are required to be 
pushouts in
{\bf Graph}. In this case, $D$ is called the {\bf context\/} graph,
and we write $\alpha : G \Rightarrow_{p,g} H$, or simply $\alpha : G 
\Rightarrow_p H$.
\end{definition}

In a graph-theoretical setting, the pushout object $H$ of Figure
\ref{fi:pushout} (a) can be understood as the gluing of graphs $B$ and 
$D$, obtained by identifying the images of $K$ along $b$ and $d$. 
Therefore the double-pushout construction can be interpreted as 
follows. In
order to apply the production $p$ to $G$, we first need to find an
occurrence of its left-hand side $L$ in $G$, i.e., a graph morphism 
$g: L
\rightarrow G$. Next, to model the deletion of that occurrence from
$G$, we have to find a graph $D$ and morphisms $k$ and $d$ such that
the resulting square is a pushout: The context graph $D$ is 
characterized categorically as the pushout
complement object of $\langle l, g \rangle$. Finally, we have to
embed the right-hand side $R$ into $D$: This embedding 
is expressed by the right pushout. 

The conditions for the existence of 
pushouts and of pushout complements depend on the category {\bf 
Graphs} 
for which the above definitions are introduced. Since we are 
interested 
just in the graph productions that represent term rewrite rules, in 
the 
rest of the paper we shall consider such definitions in the category 
{\bf TGraph$_\Sigma$}, and we will present conditions for the 
existence
of pushouts and pushout complements (see Proposition \ref{pr:po-poc}) 
only for a specific format of productions, called 
{\em evaluation rules} (according to the name in~\cite{HP:ITRJ}, where 
the
acyclic case is considered). 
Such evaluation rules are term graph productions satisfying 
some additional requirements that make them suitable to represent term 
rules.

\begin{definition}[evaluation rules]
\label{de:evaluation rules}
An {\bf evaluation rule\/} is a term graph production 
$p = (L \stackrel{l}{\leftarrow} K
\stackrel{r}{\rightarrow} R)$ 
such that
\begin{enumerate}

\item  $L$ is a tree and it is not a single empty node. Let 
$\underline{n}$ be
the root of $L$.

\item  $K= (N_L, s_L\downarrow(N_L\backslash \{\underline{n}\}), 
l_L\downarrow(N_L\backslash \{\underline{n}\}))$, that is, $K$ is 
obtained
from $L$ by making the successor and labeling functions undefined on 
the root.
Morphism $l: K \rightarrow L$ is the inclusion; notice that $l$ is an 
isomorphism on nodes and that $\varnodes{K} = \varnodes{L}\cup 
\{\underline{n}\}$.

\item  If restricted to $\varnodes{L}$ ($\subset \varnodes{K}$), morphism $r: 
K \rightarrow R$ is an isomorphism between $\varnodes{L}$ 
and $\varnodes{R}$. 
\end{enumerate}
\end{definition}

Let us make explicit the relationship between evaluation rules and 
rewrite rules. From every evaluation rule 
$p$ one can easily unravel a term rewrite rule $\mathcal{U}[p]$; 
furthermore,
for each rewrite rule $R$ we propose a suitable representation as 
evaluation rule, $\mathcal{G}[R]$.

\begin{definition} 
[from evaluation to rewrite rules and backwards]
\label{de:evaluation to rewrite}

Let $p = (L \stackrel{l}{\leftarrow} K
\stackrel{r}{\rightarrow} R)$ be an evaluation rule. The {\bf 
unraveling} of $p$ is the term rewrite rule $\mathcal{U}[p]: t 
\rightarrow s$ defined as 
follows.
\begin{enumerate}

\item
$t = \mathcal{U}_L[\underline{n}]$ (where $\underline{n}$ is the 
root of $L$, as usual);

\item
$s = \mathcal{U}_R[r(\underline{n})]\sigma$, where $\sigma: 
\varnodes{R} \rightarrow \varnodes{L}$ is the substitution defined as
$  \sigma(x) = y \mbox{ \ \ if } r(y) = x$.
\end{enumerate}

\noindent 
We shall say that an evaluation rule $p$ is {\bf non-self-overlapping}
if so is the rewrite rule $\mathcal{U}[p]$. A {\bf term graph rewriting 
system} (shortly {\bf TGRS}) $\mathcal{P}$ is a finite set of evaluation 
rules, $\mathcal{P} = \{p_i\}_{i\in I}$; $\mathcal{P}$ is called {\bf 
orthogonal} if so is the term rewriting system $\mathcal{U}[\mathcal{P}] 
\stackrel{def}{=} \{\mathcal{U}[p] \mid p \in \mathcal{P}\}$. 

The other way around, let $R: t \rightarrow s$ be a rational,
left-finite and left-linear term rewrite rule, and let $t = f(t_1,
\ldots, t_k)$ ($t$ must have this form because it cannot be a
variable). Then its {\bf graph representation} $\mathcal{G}[R]$ is the
production $\mathcal{G}[R] = (L \stackrel{l}{\leftarrow} K
\stackrel{r}{\rightarrow} R)$, where

\begin{enumerate}
\item $L = \mathcal{G}[\overline{\{t\}}]$ (by $\overline{T}$ we denote 
the closure of $T$ with respect to the subterm relation).

\item $K$ and $l : K \rightarrow L$ are defined according to point 2 
of 
Definition \ref{de:evaluation rules}.

\item $R = \mathcal{G}[\overline{\{s, t_1, \ldots, t_k\}}]$.

\item $r: K \rightarrow R$ is defined as $r(\underline{n}) = s$, and 
$r(n) = \mathcal{U}_K[n]$ if $n \in N_K\backslash\{\underline{n}\}$ (this 
is 
well-defined because the nodes of $R$ are subterms of $\{s, t_1, 
\ldots,
 t_k\}$, according to Definition \ref{de:from term graphs to 
terms}).
\end{enumerate}
\end{definition}

By the properties of the unraveling function and of evaluation rules 
(Definitions 
\ref{de:from term graphs to terms} and \ref{de:evaluation rules})
it is routine to check that for each 
evaluation rule $p$ the term rewrite rule $\mathcal{U}[p]$ is 
well-defined,
rational, total, left-finite, and left-linear (the last two because 
the left-hand side of an evaluation rule is a tree). The substitution 
$\sigma$ applied to $\mathcal{U}_R[r(\underline{n})]$ in the above 
definition is needed to ensure that $var(s) \subseteq var(t)$.
Similarly, it follows directly from the definitions that $\mathcal{G}[R]$ 
is 
a well-defined evaluation rule for each term rewrite rule $R$ 
satisfying the 
required conditions.

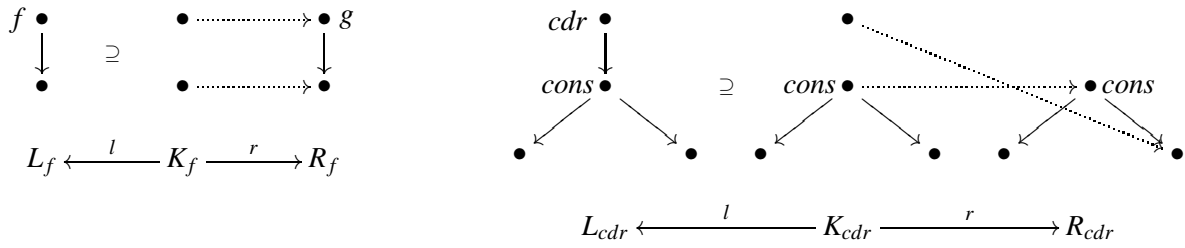
\begin{figure}[t]
$$    \SelectTips {cm}{}\xymatrix@=3ex{
{ \save []-<.3cm, .0cm>*\txt{\emph{f}}\restore \bullet \ar[d]}
     &     &  \bullet\ar@{}[dll]|\supseteq \ar@{..>}[rr]
                     &    &   {\save []+<.3cm, .0cm>*\txt{\emph{g}} \restore \bullet \ar[d]}   
                                \\
  \bullet
     &     & \bullet\ar@{..>}[rr]
                     &    &  \bullet\\
L_f & & K_f \ar[ll]_l \ar[rr]^r & & R_f 
}
\hspace{2cm}    \SelectTips {cm}{}\xymatrix@=3ex{
    &  { \save []-<.5cm, .0cm>*\txt{\emph{cdr}}\restore \bullet \ar[d]}
          &     &     &  {\bullet\ar@{..>}[rrrrdd]}  &     &     &    \\
    &  { \save []-<.5cm, .0cm>*\txt{\emph{cons}}\restore \bullet \ar[dl] \ar[dr]}
          &     &     &   { \save []-<.5cm, .0cm>*\txt{\emph{cons}}\restore \bullet \ar[dl] \ar[dr] \ar@{..>}[rrr]\ar@{}[lll]|\supseteq}
                           &     &   & { \save []+<.5cm, .0cm>*\txt{\emph{cons}}\restore \bullet \ar[dl] \ar[dr]}
                                       &    \\
{ \bullet}   
    &     &  {\bullet}
                &   {\bullet}
                      &    &  {\bullet}
                                 &  {\bullet}
                                       &    & {\bullet}\\
&L_{cdr} & & & K_{cdr} \ar[lll]_{l} \ar[rrr]^r & & & R_{cdr} 
}
$$

  \caption{The evaluation rules ${\cal
  G}[R_f]$ and $\mathcal{G}[R_{cdr}]$}
  \protect\label{fi:evaluationRules}
\end{figure}

\begin{example}[evaluation rules]
\label{ex:rules}
Figure~\ref{fi:evaluationRules} shows the evaluation rules ${\cal
  G}[R_f]$ and $\mathcal{G}[R_{cdr}]$, which are the graph
representations of the rewrite rules $R_f: f(x) \to g(x)$ and
$R_{cdr}: cdr(cons(x,y)) \to y$. The left morphisms of the rules are
the obvious inclusions, while the right morphisms are determined by
the mapping of nodes that is depicted with dotted arrows.
Rule $R_{cdr}$ is a collapsing rule which
describes the behaviour of the $cdr$ operator on {\sc LISP}-like lists
built with the pairing operator $cons$.
\end{example}

The next proposition ensures that if we consider a 
non-self-overlapping  evaluation rule, then 
the existence of an occurrence morphism from its 
left-hand side to a term graph is a sufficient condition for the 
applicability of the rule, i.e., the pushout complement and the 
pushout 
of Figure \ref{fi:pushout} (b) can always be constructed.

\begin{proposition}
[existence of pushout complements and pushouts]
\label{pr:po-poc}
Let $p = (L \stackrel{l}{\leftarrow} K
\stackrel{r}{\rightarrow} R)$ be an evaluation which is not 
self-overlapping,\footnote{Without this condition the statement would 
not be true, because the {\em identification condition}~\cite{Ehr:TIAA} 
may not be satisfied.} and let $g: L 
\rightarrow G$ be an occurrence morphism. Then in category {\bf 
TGraph}
there exists a pushout complement $\langle  D, k:
K \rightarrow D, d: D \rightarrow G \rangle $
 of $\langle l, g\rangle$, where $D$ is 
obtained from graph $G$ by making the labeling and the successor 
function 
undefined on the image of the root of $L$. 

Moreover, a pushout $\langle H, h: R \rightarrow H, b: D \rightarrow H 
\rangle$ of the resulting arrow $k: K \rightarrow D$ and $r$ 
always exists, and therefore in the above hypotheses there exists a 
direct derivation $G \Rightarrow_{p,g} H$.
\end{proposition}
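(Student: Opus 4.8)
The plan is to construct the two pushouts explicitly in $\mathbf{TGraph}_\Sigma$ and then verify the defining universal properties, using the very concrete description of the category (morphisms are node functions preserving labels and successors) rather than any abstract existence theorem. The key structural fact I would exploit is that an evaluation rule leaves the node set essentially fixed: by point~2 of Definition~\ref{de:evaluation rules}, $l: K \to L$ is the identity on nodes and only removes the label and successors of the root $\underline{n}$; and by point~3, $r$ is injective and is a bijection between $\varnodes{L}$ and $\varnodes{R}$. This means the whole construction is ``local to the root'': deleting the root's structure, then re-attaching $R$ in its place.

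First I would build the pushout complement $D$. The candidate is exactly the one named in the statement: take $D$ with the same node set as $G$, and make $l_G$ and $s_G$ undefined on $g(\underline{n})$, leaving everything else as in $G$. I would let $d: D \to G$ be the identity on nodes (it is a morphism because we only \emph{removed} a label/successor in $D$, so the preservation condition $l_G(d(n)) = l_D(n)$, $s_G(d(n)) = d^*(s_D(n))$ holds wherever $l_D(n)$ is defined), and let $k: K \to D$ be the restriction of $g$ (this is a morphism for the same reason, since $K$ has no label on $\underline{n}$). Checking commutativity $d \circ k = g \circ l$ is immediate on nodes. For the universal property I must show $\langle G, g, d\rangle$ is a pushout of $\langle l, k\rangle$: given any $G'$ with $g': L \to G'$ and $d': D \to G'$ agreeing on $K$, the mediating morphism $f: G \to G'$ is forced to be $d'$ on nodes (since $d$ is the identity on nodes), and I must check $f$ preserves the label and successors \emph{at} $g(\underline{n})$ — these are recovered from $g'$ via $l_L(\underline{n})$ and $s_L(\underline{n})$, using that $g'$ preserves them and that $g',d'$ agree on the non-root nodes. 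This is where the \textbf{non-self-overlapping} hypothesis enters: it guarantees the \emph{identification condition}, so that $g$ does not fold two distinct nodes of $L$ onto one node of $G$ in a way that would make the removed structure ambiguous or make $f$ ill-defined; without it the required $f$ need not exist or be unique.

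Next I would build the pushout $H$ of $k: K \to D$ and $r: K \to R$. Since $l$ (hence $k$) is a bijection on nodes while $r$ merely identifies $\varnodes{L}$ with $\varnodes{R}$, I would realize $H$ as the quotient of the disjoint union $N_D \uplus N_R$ that glues each variable node $x \in \varnodes{K}$ to $r(x)$, i.e.\ identifies $k(x) \in N_D$ with $r(x) \in N_R$; the label and successor functions on $H$ are inherited from $D$ and $R$, and are well-defined precisely because the glued nodes are variable nodes carrying no conflicting label. The morphisms $b: D \to H$ and $h: R \to H$ are the evident injections-into-the-quotient, commutativity $b \circ k = h \circ r$ holds by construction, and the universal property is the standard one for a quotient, which I would verify directly on nodes. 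I would then note $H$ has finitely many nodes, so it is a legitimate object of $\mathbf{TGraph}_\Sigma$. Combining the two squares yields the direct derivation $G \Rightarrow_{p,g} H$.

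The \textbf{main obstacle} will be the verification that the left square is genuinely a pushout — specifically, establishing existence and uniqueness of the mediating morphism $f$ and reconstructing the root's label and successors correctly. This is the step that genuinely needs the non-self-overlapping assumption, and I expect the bulk of the (routine but delicate) case analysis to live there; the right pushout, by contrast, is a straightforward gluing construction.
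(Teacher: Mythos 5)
The paper does not actually prove Proposition~\ref{pr:po-poc}; it defers to~\cite{CR:HRJR}, where the construction is carried out in the equivalent category {\bf Jungle}. So your plan of exhibiting both squares concretely and verifying the universal properties by hand is the natural route, and your treatment of the left square is essentially right, including where you locate the use of the non-self-overlapping hypothesis. One refinement there: the failure without that hypothesis occurs earlier than you place it. If $g$ identified $g(\underline{n})$ with the image of a \emph{labelled} inner node $m$ of $L$, then $k=g$ would not even be a morphism $K \to D$ ($m$ is labelled in $K$ but its image is unlabelled in $D$), so the proposed complement triple would simply not exist; it is not primarily a matter of the mediating $f$ being ill-defined.

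The genuine gap is in the right square. The pushout of $k: K \to D$ and $r: K \to R$ must be the quotient of $N_D \uplus N_R$ by the equivalence generated by $k(n) \sim r(n)$ for \emph{every} $n \in N_K$, not only for $n \in \varnodes{K}$. With your gluing, each labelled node of $K$ (i.e.\ each non-root labelled node of $L$) retains two distinct copies in $H$, one coming from $D$ and one from $R$, so $b \circ k \neq h \circ r$ and the square does not even commute; for $\mathcal{G}[R_{cdr}]$ you would obtain two $cons$ nodes where there should be one. (The slip is invisible for $\mathcal{G}[R_f]$, where every node of $K$ happens to be a variable node.) Once all of $N_K$ is glued, well-definedness of the label and successor functions on the quotient is argued not by ``glued nodes are variable nodes'' but by checking that each equivalence class contains at most one labelled representative: $k(\underline{n})$ is unlabelled in $D$ by construction, $r(x)$ is unlabelled in $R$ for $x \in \varnodes{L}$ by point~3 of Definition~\ref{de:evaluation rules}, and for labelled $n \in N_K$ both images carry the same label $l_K(n)$. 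Relatedly, your claim that $r$ is injective is false for collapsing rules: there $r(\underline{n}) = r(x)$ for some $x \in \varnodes{L}$, the generated equivalence then merges $g(\underline{n})$ with $g(x)$ \emph{inside} the image of $D$, and $b: D \to H$ is not an injection into the quotient. This case cannot be glossed over, since it is exactly the mechanism by which the circular-$I$ reduces to a single empty node in this approach.
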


The proof of the last proposition is reported (for the equivalent 
category  {\bf Jungle}) in~\cite{CR:HRJR}, where also general conditions 
for the existence of pushouts are presented.\footnote{An interesting 
fact, which is not relevant for this paper, is that the pushout of two 
arrows exists in {\bf TGraph$_\Sigma$} iff the associated 
substitutions 
unify, and in this case the pushout is a most general unifier.} 
It is worth stressing that in the hypotheses of the last proposition,
the pushout 
complement object $D$ has the same nodes of $G$; thus the nodes of $G$ 
can be ``traced'' after the rewriting. More formally, there is a total 
function, called the {\bf track function\/}~\cite{HP:ITRJ} $tr: N_G 
\rightarrow N_H$,  
defined as $tr(n)=b(n)$ for all $n\in N_G$ (= $N_D$).

\section{Adequacy of algebraic term graph rewriting for rational 
parallel term rewriting}
\label{se:adequacy}

The relationship between term and term graph rewriting has been nicely 
formalized in~\cite{KKSV:AGRS} with the notion of {\em adequate mapping} 
between rewriting systems. We recall here the definition, referring to 
that paper for the precise motivations. A {\em rewriting system} is 
defined in this context as a triple $(A, R, S)$, where $A$ is a set of 
states (in our case terms or graphs), $R$ is a set of rules, and $S$ 
is a set of reduction sequences, closed under certain operations.   

\begin{definition} 
[adequate mapping between rewriting systems]
\label{de:adequate mapping}
Let $(A_1,$ $ R_1,$ $ S_1)$ and $(A_2, R_2, S_2)$ be two rewriting systems. 
A mapping $\mathcal{U}: A_1 \rightarrow A_2$ is {\bf adequate} if:

\begin{description}
\item {[{\em Surjectivity}]} $\mathcal{U}$ is surjective;
\item {[{\em Preservation of normal forms}]} $a \in A_1$ is in normal 
form iff $\mathcal{U}[a] \in A_2$ is in normal form.\footnote{A state is 
in 
normal form if there is no reduction sequence starting from it.}

\item {[{\em Preservation of reductions}]} If $a \rightarrow^{*} a'$ 
with a reduction sequence in $S_1$, then $\mathcal{U}[a] \rightarrow^{*} 
\mathcal{U}[a']$ with a reduction sequence in $S_2$.\footnote{We consider 
only reduction sequences of finite length.} 

\item {[{\em Cofinality}]} For $a \in A_1$ and $b \in A_2$, if $\mathcal{
U}[a] \rightarrow^* b $ in $S_2$, then there is an $a'$ in $A_1$ such 
that $a \rightarrow^* a'$ in $S_1$ and $b \rightarrow^* \mathcal{U}[a']$ 
in $S_2$.
\end{description}
\end{definition}

We show below that the unraveling function $\mathcal{U}$ introduced in 
the previous sections is an adequate mapping from a given orthogonal 
TGRS 
$\mathcal{P}$ to its unraveled orthogonal TRS $\mathcal{U}[\mathcal{P}]$, 
restricting the allowed parallel reduction sequences to the rational 
ones. Intuitively, the restriction to \emph{rational} parallel reductions is
justified by the fact that an occurrence of an evaluation rule $p$ in a
term graph $G$ induces a possibly infinite, but certainly
rational parallel redex in the term obtained by unraveling $G$.

\begin{definition} 
[rational parallel reduction sequences]
\label{de:rational sequences}
Let $\mathcal{R}$ be an orthogonal TGR. A parallel reduction $t 
\rightarrow_\Phi t'$ is {\bf rational} if the term obtained by marking 
in $t$ all the occurrences of redexes of $\Phi$ is rational. A {\bf 
rational parallel reduction sequence} is a   parallel reduction 
sequence where each step is rational. A {\bf rational} TRS is a triple
$(CT_\Sigma^{rat}, \mathcal{R}, \mathcal{S}_{rat}(\mathcal{R}))$, where 
$CT_\Sigma^{rat}$ is 
the set of all rational terms in $CT_\Sigma$, $\mathcal{R}$ is an 
orthogonal TRS where all right-hand sides are rational terms, and 
$\mathcal{S}_{rat}(\mathcal{R})$ is the set of all rational parallel 
reduction sequences 
using rules in $\mathcal{R}$.
\end{definition}

It is worth stressing that strong confluence and confluence 
(Proposition \ref{pr:confluence}) also hold for rational parallel 
reductions. 

In order to prove the adequacy of the unraveling function, we need 
the following fundamental result, which formalizes the ``parallel
interpretation'' discussed in the introduction: 
A single (possibly cyclic) term graph 
reduction can be interpreted as a parallel (possibly infinite) term 
reduction.

\begin{theorem} 
[from graph reductions to parallel term reductions]
\label{th:soundness}
Let $p = 
(L \stackrel{l}{\leftarrow} K \stackrel{r}{\rightarrow} R)$ be a 
non-self-overlapping evaluation rule, let $G$ be a term graph, and let 
$g: L \rightarrow 
G$ be an occurrence morphism. By Proposition \ref{pr:po-poc}
we know that $G \Rightarrow_{p, g} H$, with the corresponding track 
function $tr: N_G \rightarrow N_H$. Then for each node $n \in N_G$, we 
have 
$$\mathcal{U}_G[n] \rightarrow_\Phi \mathcal{U}_H[tr(n)]\sigma$$
\noindent
where $\Phi$ is the (possibly infinite) parallel redex $\Phi = 
\{(\mathcal{O}(\pi), \mathcal{U}[p])  \mid \pi$ is a path in $G$ 
from $n$ to  $g(\underline{n}) \}$ ($\underline{n}$ is the root of 
$L$), and where substitution $\sigma: 
var(H) \rightarrow var(G)$ is defined as $\sigma(x) = y$ if $y \in 
var(G) \wedge tr(y) = x$, and $\sigma(x) = \bot$ if $\not \exists y\in 
\varnodes{G} \,.\, tr(y) = x$.
\end{theorem}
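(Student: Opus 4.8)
The plan is to verify directly that $\Phi$ is a legitimate parallel redex of $\mathcal{U}_G[n]$, to read off the intended result $\mathcal{U}_H[tr(n)]\sigma$ from the pushout-complement graph $D$ via Proposition~\ref{pr:morphisms}, and then to establish $\mathcal{U}_G[n]\rightarrow_\Phi \mathcal{U}_H[tr(n)]\sigma$ through the infinitary machinery of Definition~\ref{de:par-rdx-app}, comparing the two terms occurrence by occurrence. First I would check that $\Phi$ is well formed: for every path $\pi$ from $n$ to $g(\underline{n})$ one has $\mathcal{U}_G[n]/\mathcal{O}(\pi)=\mathcal{U}_G[g(\underline{n})]$ by the defining property of unraveling, and $\mathcal{U}_G[g(\underline{n})]=\mathcal{U}_L[\underline{n}]\sigma_g$ by Proposition~\ref{pr:morphisms}(1) applied to $g\colon L\rightarrow G$. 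Since $\mathcal{U}_L[\underline{n}]$ is the finite, total, linear left-hand side of $\mathcal{U}[p]$, the pair $(\mathcal{O}(\pi),\mathcal{U}[p])$ is a genuine redex realized by $\sigma_g$. Distinct paths from $n$ carry distinct occurrences (at most one path per occurrence), so $\Phi$ is a countable set of distinct redexes, and the non-self-overlapping hypothesis on $p$ makes them pairwise non-overlapping; thus $\Phi$ is an infinite parallel redex in the sense of Definition~\ref{de:parallel-rdx}.

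Next I would express the target as a substitution instance of a single ``template''. In the pushout complement, $D$ has the same nodes as $G$ with $g(\underline{n})$ emptied, and the derivation morphisms $d\colon D\rightarrow G$ and $b\colon D\rightarrow H$ (with $tr=b$) are the identity on nodes. Proposition~\ref{pr:morphisms}(1) then yields $\mathcal{U}_G[n]=\mathcal{U}_D[n]\sigma_d$ and $\mathcal{U}_H[tr(n)]=\mathcal{U}_D[n]\sigma_b$, where $\sigma_d$ fixes the context variables and sends the hole $g(\underline{n})$ to the matched term $\mathcal{U}_L[\underline{n}]\sigma_g$, while $\sigma_b$ sends it to the installed right-hand side $\mathcal{U}_H[h(r(\underline{n}))]$. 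Hence the claimed result rewrites as $\mathcal{U}_H[tr(n)]\sigma=(\mathcal{U}_D[n]\sigma_b)\sigma$, i.e. the template $\mathcal{U}_D[n]$ with the hole $g(\underline{n})$ sent to $\mathcal{U}_H[h(r(\underline{n}))]\sigma$ and the context variables kept, every variable node of $H$ being renamed to its $tr$-preimage and sent to $\bot$ precisely when the only preimages are \emph{labelled} nodes of $G$. This last clause is forced by collapsing rules on cyclic redexes: for the circular-$I$, $tr(\nu)$ is an unlabelled node of $H$ whose only preimage $\nu$ is labelled, so $\sigma(tr(\nu))=\bot$, recovering $I^\omega\rightarrow_\Phi\bot$.

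Finally I would carry out the reduction. By Theorem~\ref{th:WD-ParRdxApp} the infinite parallel reduction is well defined and independent of the approximating chain, so it suffices to fix a convenient chain $t_0\le t_1\le\cdots$ for $\mathcal{U}_G[n]$ in which all redexes lie within increasing finite depth and to show $\bigcup_i d_i=\mathcal{U}_H[tr(n)]\sigma$. Fixing an occurrence $w$, for large $i$ one has $d(w)=d_i(w)$, where $d_i$ is the complete development of the finite, pairwise non-overlapping $\Phi_i$, which replaces every matched $\mathcal{U}_L[\underline{n}]\sigma_g$ by $\mathcal{U}_R[r(\underline{n})]\sigma_g$. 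Tracing $w$ through the gluing that defines $H$ shows that it names a path from $tr(n)$ alternating between context segments (mirroring $G$ through $b$) and internal segments of installed copies of $R$ (coming from the right-hand side), and that the label of $d_i$ at $w$ agrees with that of $\mathcal{U}_H[tr(n)]$ up to the renaming $\sigma$ of variable nodes; passing to least upper bounds over $i$ and over $w$ gives the equality.

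The main obstacle is exactly this path-decomposition argument, which must connect redex reduction in the unravelled term with the re-routing through $R$ inside the pushout $H$, uniformly for infinitely many, possibly nested redexes produced by cycles and for collapsing rules. In the collapsing cyclic case the installed right-hand side reconnects the context to a shallower node with no genuine-variable preimage, and iterated collapsing along the cycle drives the chain $\{d_i\}$ down to $\bot$ at those occurrences; showing that this limit is forced — so that the $\bot$-clause of $\sigma$ is inevitable rather than stipulated — is the delicate point, resting on continuity of substitution (\S\ref{ss:Term Rewriting}) and on the chain-independence guaranteed by Theorem~\ref{th:WD-ParRdxApp}.
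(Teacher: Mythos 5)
Your overall strategy coincides with the paper's: show that every $(\mathcal{O}(\pi),\mathcal{U}[p])$ is a redex realized by the single substitution $\sigma_g$ (via Proposition~\ref{pr:morphisms}), build a chain of approximants of $\mathcal{U}_G[n]$, reduce the finitely many redexes visible in each approximant, and compare the limit with $\mathcal{U}_H[tr(n)]\sigma$ occurrence by occurrence. The reading of the target through the pushout complement ($\mathcal{U}_G[n]=\mathcal{U}_D[n]\sigma_d$ and $\mathcal{U}_H[tr(n)]=\mathcal{U}_D[n]\sigma_b$) is a correct and tidy observation that the paper uses only implicitly in its Base Case; note, however, that it captures a single ``layer'' of the rewriting: the occurrences of the variable $g(\underline{n})$ in $\mathcal{U}_D[n]$ correspond to paths in $D$, which cannot pass through the emptied node, whereas $\Phi$ is indexed by paths in $G$, which can. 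For the circular-$I$ the template is a bare variable while $\Phi$ is infinite, so the template identity cannot by itself replace the limit argument.

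Two steps are genuinely missing. First, the approximating chain: truncating $\mathcal{U}_G[n]$ at ``increasing finite depth'' does not automatically satisfy the first condition of Definition~\ref{de:par-rdx-app} (no redex may cross the boundary: a left-hand side of depth greater than one can be cut in half by a depth cutoff). The paper avoids this by enumerating the set $W$ of redex occurrences so that no occurrence is followed by one of its prefixes, and defining $t_i$ by erasing everything at or below $w_j$ for $j>i$; this also makes $\Phi_i=\{(w_1,R),\dots,(w_i,R)\}$ explicit. Second, and more seriously, your occurrence-tracing argument yields at best $\bigcup_{i}d_i\le \mathcal{U}_H[tr(n)]\sigma$, since you compare labels only at occurrences where the $d_i$ are defined. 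You flag the converse inequality as ``the delicate point'' but supply no argument for it, and this is exactly where the collapsing/non-collapsing dichotomy does real work in the paper: if $\mathcal{U}[p]$ is not collapsing, the least depth of a $\bot$ in $s_i$ tends to infinity, so $\bigcup_i s_i$ is total and hence maximal in the approximation order, forcing equality; if $\mathcal{U}[p]$ is collapsing, the chain stabilizes immediately ($s_0=s_1=\cdots$) and one checks $s_0=\mathcal{U}_H[tr(n)]\sigma$ directly, using that the distinguished variable $tr(g(\underline{n}))$ of $H$ is sent to $\bot$ by $\sigma$. Without an argument of this kind the proof only establishes that the parallel reduction produces an approximation of the claimed result, not the result itself.
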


\begin{proof} 
For the sake of simplicity, let us assume that $var(G) = \emptyset$, 
which implies that substitution $\sigma$ becomes $x\sigma = \bot $ for 
all $x \in var(H)$.\footnote{The general case needs an 
additional technical lemma showing that the trace function maps 
variables to variables in an injective way (thus $\sigma$ is 
well-defined); this can be proved by a careful inspection of the 
double-pushout diagram.}

Let $R: l_p \rightarrow r_p$ be the rewrite rule $\mathcal{
U}[p]$. We first have to show that $\Phi$ is a parallel redex of 
$\mathcal{U}_G[n]$, i.e., that for each $(\mathcal{O}(\pi), R) \in \Phi$, 
there is a substitution $\tau$ such that $\mathcal{U}_G[n] / \mathcal{
O}(\pi) = l_p\tau$. In fact we have

\[\begin{array}{llp{10cm}}
\mathcal{U}_G[n] / \mathcal{O}(\pi) & = & [because $\pi$ is a path from $n$ 
to $g(\underline{n})$ (see Definition \ref{de:from term graphs to 
terms})]\\
\mathcal{U}_G[g(\underline{n})] & = & [by point 1 of Proposition 
\ref{pr:morphisms} and by Definition \ref{de:evaluation to rewrite}]\\
\mathcal{U}_L[\underline{n}]\sigma_g = l_p\sigma_g& &
\end{array}\]

\noindent
Thus all redexes in $\Phi$ are realized by the same substitution 
$\sigma_g$. Let $W$ be the set of all occurrences of paths from $n$ to 
$g(\underline{n})$ in $G$ (thus $W$ is the set of all occurrences of 
redexes in $\Phi$), and let $\langle w_1, w_2, \ldots \rangle$ be an 
arbitrary but fixed enumeration of $W$ such that if $w_i < w_j$, then 
$i < j$ (in words, no occurrence can be followed by one of its prefixes). For all 
$i < \omega$, define

\[t_i(u) = \left\{\begin{array}{l@{\hspace{1cm}}p{7cm}}
\mathcal{U}_G[n](u) & if $\forall j>i\,.\,u\not\ge w_j$\\
\bot & \mbox{otherwise.}
\end{array}
\right. \]

\noindent
Obviously, $\{t_i\}_{i< \omega}$ is a chain and $\bigcup_{i< \omega} 
\{t_i\} = \mathcal{U}_G[n]$. Furthermore, chain  $\{t_i\}_{i< \omega}$
satisfies the conditions of Definition \ref{de:par-rdx-app}: by orthogonality,
$(w_j, R)$ is a redex of $t_i$ if $j \leq i$,  while $t_i(w_j) = \bot$ if
$j > i$; and 
the subset $\Phi_i$ of $\Phi$
including all redexes of $t_i$ is finite (more precisely, $\Phi_i =
\{(w_1, R), \ldots, (w_i, R)\}$, and in particular $\Phi_0 =
\emptyset$). Thus by Definition \ref{de:par-rdx-app} we have ${\cal
  U}_G[n] \rightarrow_\Phi \bigcup_{i< \omega} \{s_i\}$, where, for
each $i$, $t_i \rightarrow_{\Phi_i} s_i$.

It remains to prove that $\bigcup_{i< \omega} \{s_i\} = 
\mathcal{U}_H[tr(n)]\sigma$. Let us first show by induction that $\forall 
i < \omega \,.\, s_i \leq \mathcal{U}_H[tr(n)]\sigma$.\footnote{It is 
worth recalling that $t \leq t' \Leftrightarrow \forall u \in \mathcal{
O}(t)\,.\, t(u) = t'(u)$.} 

\begin{description} 

\item {[Base Case]}
Since $\Phi_0 = \emptyset$, we have that $s_0 = t_0$. Let $u \in \mathcal{
O}(s_0)$. By definition $u \not \geq w$ for all $w \in W$. Thus we 
have

\noindent
$\begin{array}[b]{rcp{7.8cm}}
s_0(u) = \mathcal{U}_G[n](u) &=& [assuming that $\pi$ is the only path 
from $n$ to $n'$ in $G$ with occurrence $u$, and since $var(G) = 
\emptyset$]\\
l_G(n') 		&=& [by the explicit definition of $D$ (Proposition 
\ref{pr:po-poc}), since $n' \not = g(\underline{n})$]    \\
l_D(n') 		&=& [by properties of morphisms and definition of 
$tr$]\\
l_H(b(n')) = l_H(tr(n')) &=&  [because the track function preserves 
all paths (like $\pi$) not containing $g(\underline{n})$]\\
\mathcal{U}_H[tr(n)](u)		&=& [$\sigma$ does not affect occurrences 
of operators, like $u$]\\ 
\mathcal{U}_H[tr(n)]\sigma(u)
\end{array}$
  
\item {[Inductive Case]} We must show that $s_i \leq \mathcal{
U}_H[tr(n)]\sigma \Rightarrow s_{i+1} \leq \mathcal{U}_H[tr(n)]\sigma$. 
By the above definitions, the only redex of $t_{i+1}$ which is not of 
$t_i$ is $\Delta_{i+1} = (w_{i+1}, R)$. If $t_{i+1} 
\rightarrow_{\Phi_i} t'$, let $\Delta_{i+1} \backslash \Phi_i$ be the 
residual, and let $V = \{v_1, \ldots, v_k\}$ its set of occurrences. 
Then, if $\tau$ is the substitution such that $l_p\tau = t_{i+1}/ v$ 
for all $v \in V$,\footnote{It can be checked that all redexes in 
$\Delta_{i+1} \backslash \Phi_i$ are realized by the same substitution 
in $t'$.}  a careful inspection reveals that $s_{i+i}$ can be defined 
in term of $s_i$ as

$s_{i+1}(u) = \left\{\begin{array}{l@{\hspace{1cm}}p{10cm}}
s_i(u) & if $u \in \mathcal{O}(s_i)$\\
r_p\tau(w) & if there is a $v \in V$ such that $u = vw$, and $w \in 
\mathcal{O}(r_p\tau)$\\
\bot & otherwise.
\end{array}
\right. $

\noindent
Exploiting the induction hypothesis, it remains to show that 
$s_{i+1}(vw) = \mathcal{U}_H[tr(n)]\sigma(vw)$ only for $v \in V$ and $w 
\in \mathcal{O}(r_p\tau)$. This can be done by combining the techniques 
used for the acyclic case in~\cite{HP:ITRJ,CR:HRJR}, and those used for the 
Base Case above, because the definition of $t_{i+1}$ ensures 
that no redex of $\Phi$ appears in substitution $\tau$ (although 
$\tau$ 
may substitute an infinite term for a variable).

\end{description}

\noindent 
Finally it remains to prove that $\mathcal{U}_H[tr(n)]\sigma \leq 
\bigcup_{i< \omega} \{s_i\}$, and this can be done as follows. If rule 
$R$ is not collapsing, it can be shown that $\bigcup_{i< \omega} 
\{s_i\}$ is a total term (and therefore it is a maximal element of the
approximation ordering), because the least depth of $\bot$'s in $s_i$ 
tends to infinity. If instead $R$ is collapsing, then it is easy to 
check that $s_0 = s_1 = \ldots = \bigcup_{i< \omega} \{s_i\}$, and 
that $\mathcal{U}_H[tr(n)]\sigma = s_0$ using the Base Case above and the 
fact that the only variable in $H$, $tr(g(\underline{n}))$, is 
substituted for $\bot$ by $\sigma$. 
\end{proof}

Exploiting this main result, we can prove the adequacy of the 
unraveling function. One minor problem is due to the fact that the 
unraveling of a term graph is not a term, but a set of terms. Thus 
either we consider TRS where a state can be a set of terms (which 
seems quite unnatural), or we have to consider {\em pointed} term 
graphs, i.e., graphs with a distinguished node, and to assume that the 
$\mathcal{U}$ unravels each graph at that specific node. However, in this 
last case, unraveling would not preserve normal forms, because a graph 
may contain a redex in a part that is not reachable from the 
distinguished node (in the {\em garbage}), and such redex would not 
appear in the unraveled term. Full adequacy could be recovered by 
adding garbage collection at each graph rewriting step (as it is done 
in~\cite{BEGKPS:TGR,KKSV:AGRS}) but we prefer to prove a 
slightly weaker result without modifying the graph rewriting formalism.

\begin{theorem} 
[adequacy of TGR for rational parallel TR]
\label{th:adequacy}
Given an orthogonal TGRS $\mathcal{P}$, let $(\mathcal{G}, \mathcal{P}, \mathcal{
S}(\mathcal{P}))$ be the rewriting system where $\mathcal{G}$ is the set of 
all pointed term graphs over $\Sigma$ and $\mathcal{S}(\mathcal{P})$ is the 
set of all graph reduction sequences using rules in $\mathcal{P}$ and 
such that the track function preserves the distinguished node. 

Then the unraveling function $\mathcal{U}$ is a mapping from  $(\mathcal{G}, 
\mathcal{P}, \mathcal{S}(\mathcal{P}))$ to the rational TRS $(CT_\Sigma^{rat}, 
\mathcal{
U}[\mathcal{P}],$ $\mathcal{S}_{rat}(\mathcal{U}[\mathcal{P}]))$ which satisfies all 
the conditions of adequate mappings (Definition \ref{de:adequate 
mapping}), except of the preservation of normal forms. Nevertheless, 
it satisfies the following weaker version: {\em [Weak preservation of 
normal forms]} if $a \in A_1$ is a normal form, then so is $\mathcal{
U}[a] \in A_2$.    

\end{theorem}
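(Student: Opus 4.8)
The plan is to verify the four conditions of Definition~\ref{de:adequate mapping} one by one, drawing on Theorem~\ref{th:soundness} as the central tool, and to diagnose exactly where the full preservation-of-normal-forms condition fails so that the weaker version can be salvaged.

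First I would establish \emph{surjectivity}. Given any rational term $t \in CT_\Sigma^{rat}$, I must produce a pointed term graph that unravels to $t$ at its distinguished node. This is precisely what the construction $\mathcal{G}[\{t\}]$ of Definition~\ref{de:from term graphs to terms} provides: since $t$ is rational it has finitely many distinct subterms, so $\overline{\{t\}}$ is finite and yields a legitimate finite term graph, and pointing it at the node corresponding to $t$ itself gives $\mathcal{U}_{\mathcal{G}[\{t\}]}[t] = t$. Next, for \emph{preservation of reductions}, I would take a single graph step $G \Rightarrow_{p,g} H$ in $\mathcal{S}(\mathcal{P})$ that preserves the distinguished node, apply Theorem~\ref{th:soundness} at that node to obtain a (possibly infinite) parallel reduction $\mathcal{U}_G[n] \rightarrow_\Phi \mathcal{U}_H[tr(n)]\sigma$, observe that $\Phi$ is rational because the marked redex occurrences correspond to a single node $g(\underline{n})$ of the finite graph $G$ (so the marked term has finitely many distinct subterms), and then iterate along the finite graph reduction sequence. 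I would check that the induced term reductions all lie in $\mathcal{S}_{rat}(\mathcal{U}[\mathcal{P}])$ and compose properly, using that each $\mathcal{U}_H[tr(n)]\sigma$ is again rational.

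For \emph{cofinality} I would argue that a parallel term reduction $\mathcal{U}[a] \rightarrow^* b$ can be lifted back to the graph level. The key is that each rational parallel redex in a rational term $\mathcal{U}_G[n]$ collects all occurrences that unravel to a common redex-node in $G$, so by Proposition~\ref{pr:morphisms} (the correspondence between morphisms and substitutions) the matching occurrence morphism $g: L \to G$ exists; applying the graph rule via Proposition~\ref{pr:po-poc} produces $a'$ with $a \Rightarrow_p a'$, and Theorem~\ref{th:soundness} together with the confluence of parallel reduction (Proposition~\ref{pr:confluence}) forces $b \rightarrow^* \mathcal{U}[a']$. The main obstacle, and the reason the theorem only claims the \emph{weak} preservation of normal forms, is the garbage phenomenon flagged in the text before the statement: if $G$ contains a redex on a node unreachable from the distinguished node, then $G$ is not in normal form as a graph, yet that redex contributes nothing to $\mathcal{U}[a]$, which may therefore be a term normal form. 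Hence the biconditional fails in the direction ``$\mathcal{U}[a]$ normal $\Rightarrow a$ normal''.

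What does survive is the easy direction, and I would prove it by contraposition: if $\mathcal{U}[a]$ is \emph{not} in normal form, then some rule of $\mathcal{U}[\mathcal{P}]$ matches at an occurrence $w$ of $\mathcal{U}_G[n]$, which by the path-to-occurrence correspondence of Definition~\ref{de:from term graphs to terms} and by Proposition~\ref{pr:morphisms}(2) yields a node of $G$ carrying an occurrence morphism for the corresponding evaluation rule; since that node is reachable from $n$ along the path realizing $w$, and the distinguished node is preserved, $G$ admits a graph reduction in $\mathcal{S}(\mathcal{P})$ and so $a$ is not in normal form. Contrapositively, $a$ normal implies $\mathcal{U}[a]$ normal, which is exactly the weak condition. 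I expect the genuinely delicate point to be the bookkeeping in cofinality, ensuring that the lifted graph derivation is finite and that the residual redexes after a parallel step correspond to a legitimate subsequent graph occurrence; this is where I would lean most heavily on the rationality restriction and on strong confluence.
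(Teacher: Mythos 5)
Your overall decomposition matches the paper's proof exactly: the four conditions of Definition~\ref{de:adequate mapping} are checked separately, surjectivity comes from the $\mathcal{G}[\cdot]$ construction on pointed graphs, preservation of reductions is repeated application of Theorem~\ref{th:soundness}, the weak preservation of normal forms is the contrapositive reading of point 2 of Proposition~\ref{pr:morphisms}, and the failure of the full condition is diagnosed via garbage. Those three parts of your argument are correct and essentially identical to the paper's.

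The gap is in the base case of cofinality. Your key claim --- that ``each rational parallel redex in a rational term $\mathcal{U}_G[n]$ collects all occurrences that unravel to a common redex-node in $G$'' --- is false: a rational parallel redex $\Phi$ may be a proper subset of the set of occurrences sitting on a given node (take $G$ to be the circular-$I$ and $\Phi = \{(\lambda, R_I)\}$, a single redex out of the infinitely many occurrences mapping to the unique node; $\Phi$ is finite, hence rational). So after applying the corresponding graph rules you do not land on $t$ itself. The paper handles this by choosing a finite set $M$ of occurrence morphisms whose unraveling $\mathcal{U}[M]$ merely \emph{contains} $\Phi$, and then showing $t \rightarrow_{\mathcal{U}[M]\backslash\Phi} \mathcal{U}[G']$, i.e.\ the term side must still reduce the residual of the ``extra'' redexes to catch up with the graph result. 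Your attempt to close this gap by invoking Proposition~\ref{pr:confluence} does not work as stated: strong confluence only yields a \emph{common reduct} of $t$ and $\mathcal{U}[G']$, whereas cofinality demands the directed reduction $t \rightarrow^* \mathcal{U}[G']$; the tool actually needed is the residual/complete-development property (the infinite analogue of the parallel moves lemma), not confluence. Strong confluence is the right instrument only in the inductive step, where --- as you correctly anticipate --- it commutes the inductively obtained reduction $t' \rightarrow^* \mathcal{U}[G']$ with the new parallel step $t' \rightarrow_\Phi t$, after which the (corrected) base case is applied to the residual redex.
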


\begin{proof}
According to Definition \ref{de:adequate mapping}, we have:
\begin{description}
\item {[{\em Surjectivity}]} Immediate by Definition \ref{de:from term 
graphs to terms}, using pointed graphs.

\item {[{\em Weak preservation of normal forms}]} Follows  
from point 2 of Proposition \ref{pr:morphisms}: if there is a redex of 
rule $\mathcal{U}[p]$ in term $\mathcal{U}[G]$, then there is an occurrence 
morphism from the left-hand side of $p$ to $G$.  
\item {[{\em Preservation of reductions}]} By repeated applications of 
Theorem \ref{th:soundness}.

\item {[{\em Cofinality}]} Let $G$ be a pointed graph, and suppose 
that $\mathcal{U}[G] \rightarrow^* t$ via a rational reduction sequence. 
We show by induction on the length of the sequence that there is a 
$G'$ such that $t \rightarrow^* \mathcal{U}[G']$, and $G \Rightarrow^* 
G'$.

\begin{description} 

\item {[Base Case]} If  $\mathcal{U}[G] \rightarrow_\Phi t$ and $\Phi$ is 
rational, it is possible to show that there exists a finite set $M$ of 
occurrence morphisms of rules of $\mathcal{P}$ in $G$, such that their 
``unraveling'' (that we do not define formally) $\mathcal{U}[M]$ is a 
parallel redex of $\mathcal{U}[G]$ containing $\Phi$. Then let $G'$ be 
the graph obtained 
by applying to $G$ all the occurrence morphisms in $M$ in any order 
(by orthogonality $G'$ is well-defined). Then it can be shown that $t 
\rightarrow_{\mathcal{U}[M] \backslash \Phi} \mathcal{U}[G']$.

$$\xymatrix@C=6ex@R=4ex{
\mathcal{U}[G] \ar[r]^>n & t' \ar[r]^>{*} _>{d} \ar[d]_>>{\Phi}
& \mathcal{U}[G'] \ar[d]^>>{\Phi \backslash d} & \\
& t \ar[r] ^>{*} _>>>>{d\backslash \Phi} & t'' \ar[r]^>{*} & \mathcal{U}[G'']\\
G \ar@{=>}[rr]^>{*} & &  G' \ar@{=>}[r] ^>{*} & G''
}$$ 

\item {[Inductive Case]}  Suppose that $\mathcal{U}[G] \rightarrow^{n} t' 
\rightarrow_\Phi t$, and consider the diagram above.
By 
inductive hypothesis there exists $G'$ such 
that $G \Rightarrow^* G'$ and $t' \rightarrow^* \mathcal{U}[G']$. By 
strong confluence of rational parallel reductions, there exists a 
$t''$ such that the square commutes, and the residual $\Phi 
\backslash d$ is a rational parallel redex. Then applying the Base 
Case to $\mathcal{U}[G'] \rightarrow_{\Phi \backslash d} t''$, we have 
that there exists a $G''$ such that $G' \Rightarrow^* G''$ and $t'' 
\rightarrow^* \mathcal{U}[G'']$. 
\end{description}
\end{description}
\end{proof}

\begin{example}[Rewriting steps]
The diagram below
shows a few direct derivations using the evaluation rule $\mathcal{G}[R_{cdr}]$ of Example~\ref{ex:rules}. The corresponding track functions are uniquely determined by the dotted arrows and the fact that the $cons$ node is preserved. Unraveling the four graphs at the nodes $\circ$, we get the following terms: $t_0= \mathcal{U}_{G_0}[\circ] = cdr(cons(f(a),cdr(cons(f(a),cdr(\ldots)))))$, $t_1 = \mathcal{U}_{G_1}[\circ] = t_0$, $t_2= \mathcal{U}_{G_2}[\circ] = cdr(cons(f(a),\bot))$, and $t_3= \mathcal{U}_{G_3}[\circ] = \bot$. By Theorem~\ref{th:soundness}, each direct derivation $G_i \stackrel{[k]}{\Rightarrow} G_j$  corresponds to a rational parallel reduction $\mathcal{U}_{G_i}[\circ] \to_{\Phi_k} \mathcal{U}_{G_j}[\circ]$: it is easy to check that the four rational parallel redexes are $\Phi_1 = \{\lambda\}$, $\Phi_2 = \{12(12)^*\}$, $\Phi_3 = \{(12)^*\}$ and $\Phi_4 = \{\lambda\}$. 
To conclude, note that if we consider also the evaluation rule $\mathcal{G}[R_f]$ of Example~\ref{ex:rules}, then term graph $G_3$ is not a normal form, while $t_3 =  \mathcal{U}_{G_3}[\circ] = \bot$ is, showing that unraveling does not reflect normal forms in general.
\end{example}
$$\SelectTips {cm}{}
\xymatrix@=3ex{
    &    &    &    &    &    &    &   \save []-<.5cm, .0cm>*\txt{\emph{cdr}} \restore  \circ \ar[d] { \ar@(dl,dr)@{..>}[dddlllll]  \ar@(r,ul)@{..>}[drrrrrr]}
                                        &    &    &    &    &    &    &    \\
    &    &    &    &    & \mbox{}\ar@{=>}[dll]|{\ [1]\ }
                             &    &\save []-<.5cm, .0cm>*\txt{\emph{cons}} \restore \bullet \ar[dl] \ar[dr]
                                       &    &  \mbox{}\ar@{=>}[drr]|{\ [2]\ }   
                                                 &    &    &    &  \save []+<.5cm, .0cm>*\txt{\emph{cdr}} \restore \circ \ar[d]  \ar@/_/@{..>}[dddddlllll]&    \\
    & \save []-<.5cm, .0cm>*\txt{\emph{cons}} \restore \bullet \ar[dl] \ar[dr]
         &    &\mbox{}&    &    &\save []-<.3cm, .0cm>*\txt{\emph{f}} \restore \bullet \ar[d]
                                  &    & \save []-<.5cm, .0cm>*\txt{\emph{cdr}} \restore \bullet \ar@(r,ur)[ul]
                                            &    &    &    &    &  \save []+<.5cm, .0cm>*\txt{\emph{cons}} \restore \bullet \ar[dl] \ar[dr]
                                                                     &     \\
\save []-<.3cm, .0cm>*\txt{\emph{f}} \restore \bullet \ar[d]    
    &    & \save []-<.5cm, .0cm>*\txt{\emph{cdr}} \restore \circ  \ar@(r,ur)[ul]
              &    &    &    & \save []-<.3cm, .0cm>*{\txt{\emph{a}}} \restore \bullet
                                  &    &   G_0 &    &    &    &  \save []-<.3cm, .0cm>*\txt{\emph{f}} \restore \bullet \ar[d] 
                                                                &    &  \bullet 
                                                                          \\
\save []-<.3cm, .0cm>*\txt{\emph{a}} \restore \bullet
    &    & G_1  &  \mbox{} \ar@{=>}[drr]|{\ [3]\ }
                   &    &    &    &    &    &    &    & \mbox{} \ar@{=>}[dll]|{\ [4]\ }
                                                           &  \save []-<.3cm, .0cm>*\txt{\emph{a}} \restore \bullet
                                                                &      & G_2 \\
    &    &    &    &    & \mbox{}
                             &    &  \save []-<.5cm, .0cm>*\txt{\emph{cons}} \restore \bullet \ar[dl] \ar[dr]
                                       &    & \mbox{}
                                                 &    &    &    &    &     \\
    &    &    &    &    &    &\save []-<.3cm, .0cm>*\txt{\emph{f}} \restore \bullet \ar[d]
                                  &    & {\circ} &    &    &    &    &    &   \\
    &    &    &    &    &    & \save []-<.3cm, .0cm>*\txt{\emph{a}} \restore \bullet   &    &  G_3   &    &    &    &    &    &    
}
$$

\section{Conclusions}
\label{se:conclusion}

We showed that by exploiting the complete partial ordered structure of
(infinite, partial) terms, a notion of infinite parallel reduction can
be defined. Moreover, we proved that this notion can be used to relate
term graph rewriting with cycles and term rewriting, formalizing the
intuition that a single graph reduction correspond to a (possibly
infinite) parallel term reduction. This result is used to
show that cyclic graph rewriting is adequate for rational term
rewriting, exploiting the notion of adequacy proposed in~\cite{KKSV:AGRS}. As discussed in the introduction, for several
approaches to cyclic term graph rewriting it is pretty clear whether
this parallel interpretation is more faithful than the sequential one:
this point has still to be clarified for the more recent approaches, 
and it will be addressed in the full version of the paper. Another
interesting topic to explore is how far and under which additional
restrictions the proposed results could be
generalized to non-orthogonal systems. 


\bibliographystyle{eptcs}
\bibliography{TGR}

\end{document}